\title{Efficient Contractions of Dynamic Graphs – with Applications}
\author{
    Monika Henzinger\thanks{Institute of Science and Technology, Klosterneuburg, Austria}
    \and
    Evangelos Kosinas\footnotemark[1]
    \and
    Robin Münk\thanks{Technical University of Munich, Germany}
    \and
    Harald Räcke\footnotemark[2]
}
\definecolor{cblue}{rgb}{0.36, 0.54, 0.66}
\definecolor{cred}{rgb}{1, 0, 0}
\def\tildeO{\tilde{O}}
\def\hatO{\hat{O}}
\def\hatG{\hat{G}}
\def\compactpara #1{\textbf{#1}}
\newcommand{\ea}{et al.}
\def\USF{U_{\mathrm{SF}}}
\def\UCS{U_{\mathrm{CS}}}
\def\Epre{E_{\text{pre}}}
\def\Echeck{E_{\text{check}}}
\def\Econ{E_{\text{con}}}
\def\poly{\operatorname{poly}}
\def\polylog{\operatorname{polylog}}
\newtheorem{theorem}{Theorem}
\newtheorem{lemma}[theorem]{Lemma}
\newtheorem{corollary}[theorem]{Corollary}
\begin{document}

\bibliographystyle{plainurl}

\maketitle

\begin{abstract}
A non-trivial minimum cut (NMC) sparsifier is a multigraph $\hat{G}$ that preserves all non-trivial minimum cuts of a given undirected graph $G$.
We introduce a flexible data structure for fully dynamic graphs that can efficiently provide an NMC sparsifier upon request at any point during the sequence of updates. We employ simple dynamic forest data structures to achieve a fast from-scratch construction of the sparsifier at query time.
Based on the strength of the adversary and desired type of time bounds, the data structure comes with different guarantees.
Specifically, let $G$ be a fully dynamic simple graph with $n$ vertices and minimum degree $\delta$. Then our data structure supports an insertion/deletion of an edge to/from $G$ in $n^{o(1)}$ worst-case time. Furthermore, upon request, it can return w.h.p. an NMC sparsifier of $G$ that has $O(n/\delta)$ vertices and $O(n)$ edges, in $\hat{O}(n)$ time. The probabilistic guarantees hold against an adaptive adversary.
Alternatively, the update and query times can be improved to $\tilde{O}(1)$ and $\tilde{O}(n)$ respectively, if amortized-time guarantees are sufficient, or if the adversary is oblivious \footnotemark.
\footnotetext{Throughout the paper, we use $\tildeO$ to hide polylogarithmic factors and $\hatO$ to hide subpolynomial (i.e., $n^{o(1)}$) factors.}

We discuss two applications of our new data structure.
First, it can be used to efficiently report a cactus representation of all minimum cuts of a fully dynamic simple graph. Building this cactus for the NMC sparsifier instead of the original graph allows for a construction time that is sublinear in the number of edges. Against an adaptive adversary, we can with high probability output the cactus representation in worst-case $\hatO(n)$ time.
Second, our data structure allows us to efficiently compute the maximal $k$-edge-connected subgraphs of undirected simple graphs, by repeatedly applying a minimum cut algorithm on the NMC sparsifier. Specifically, we can compute with high probability the maximal $k$-edge-connected subgraphs of a simple graph with $n$ vertices and $m$ edges in $\tilde{O}(m+n^2/k)$ time. This improves the best known time bounds for $k = \Omega(n^{1/8})$ and naturally extends to the case of fully dynamic graphs.
\end{abstract}

\section{Introduction}\label{sec:introduction}
Graph sparsification is an algorithmic technique that replaces an input graph $G$ by another graph  $\hatG$, which has fewer edges and/or vertices than $G$, but preserves (or approximately preserves) a desired graph property. Specifically, for connectivity-based and flow-based problems, a variety of static sparsifiers exist, that approximately maintain cut- or flow-values in $G$~\cite{BSS12,DBLP:conf/stoc/BenczurK96, DBLP:journals/siamcomp/BenczurK15, DBLP:journals/siamcomp/FungHHP19,LS18,SS11,ST11}.

Let $n$ denote the number of vertices, $m$ the number of edges, and $\delta$ the minimum degree of the input graph.
In an undirected {simple} graph it is possible to reduce the number of vertices to $O(n/\delta)$ and the number of edges to $O(n)$ both in randomized and deterministic time $\tilde O(m)$ while preserving the value of all non-trivial minimum cuts \emph{exactly}~\cite{DBLP:conf/soda/Ghaffari0T20,DBLP:journals/jacm/KawarabayashiT19,DBLP:journals/dam/LoST21}.
A minimum cut is considered \emph{trivial} if one of its sides consists of a single vertex and we call the resulting multigraph a \emph{non-trivial minimum cut sparsifier (NMC sparsifier)}.

Most sparsification algorithms assume a static graph. Maintaining an NMC sparsifier in a fully dynamic setting, where a sequence of edge insertions and deletions can be arbitrarily interleaved with requests to output the NMC sparsifier, was only recently studied: Goranci, Henzinger, Nanongkai, Saranurak, Thorup, and Wulff-Nilsen~\cite{DBLP:conf/soda/GoranciHNSTW23} show how to maintain an NMC sparsifier in a fully dynamic graph w.h.p. in $\tilde O(n)$ worst-case update and query time (as a consequence of Theorem 3.7 in~\cite{DBLP:conf/soda/GoranciHNSTW23}), under the assumption of an oblivious adversary.
Additionally, Theorem 4.5 in~\cite{DBLP:conf/soda/GoranciHNSTW23} gives a deterministic algorithm that outputs an NMC sparsifier of size $\tilde O(m/\delta)$ in $\tilde O(m/\delta)$ worst-case query time with $\tilde O(\delta^3)$ amortized update time.

\subsection{Our Results}
In this paper, we present the first data structure for providing an NMC sparsifier of a fully dynamic graph that supports sublinear \emph{worst-case} update and query time and works against an \emph{adaptive} adversary. As a first application, we give an improved fully dynamic algorithm that outputs a cactus representation of \emph{all} minimum cuts of the current graph upon request.
Additionally, we use our data structure to compute the maximal $k$-edge connected subgraphs of an undirected simple graph with an improvement in running time for large values of $k$.

In more detail, we provide a data structure for a fully dynamic graph that can be updated in {\em worst-case} time $\hatO(1)$ and that allows (at any point during the sequence of edge updates)
to construct an NMC sparsifier in worst-case time $\hatO(n)$. The probabilistic guarantees work against an {\em adaptive} adversary. If the update time is relaxed to be {\em amortized} or if the adversary is {\em oblivious}, the update time can be improved to $\tilde O(1)$ and the query time to $\tilde O(n)$.

Our basic approach is to maintain suitable data structures in a dynamically changing graph that allow the execution of the \emph{static} NMC sparsifier algorithm based on random 2-out contractions proposed by Ghaffari, Nowicki, and Thorup~\cite{DBLP:conf/soda/Ghaffari0T20} in time $\hatO(n)$ instead of $\tilde O(m)$.
Our main insight is that this speedup can be achieved by maintaining 

\begin{enumerate}
    \item a dynamic spanning forest data structure (DSF) of the input graph $G$ and
    \item a dynamic cut set data structure (DCS), where the user determines which edges belong to a (not necessarily spanning) forest $F$ of $G$, and, given a vertex $v$, the data structure returns an edge that leaves the tree of $F$ that contains $v$ (if it exists).
\end{enumerate}

We show that these two data structures suffice to  construct an NMC sparsifier of the current graph in the desired time bound.
Put differently, we can avoid processing all edges of $G$ in order to build the NMC sparsifier, and only spend time that is roughly proportional to the size of the sparsifier.

Note that the NMC sparsifier is computed from scratch every time it is requested and no information of a previously computed sparsifier is maintained throughout the dynamic updates of the underlying graph. This ensures the probabilistic guarantees hold against an adaptive adversary if the guarantees of the chosen DSF and DCS data structures do.
Our main result is the following theorem.

\begin{restatable}{theorem}{maintheorem}\label{theorem:main}
Let $G$ be a fully dynamic simple graph that currently has $n$ nodes and minimum degree $\delta>0$. There is a data structure that outputs an NMC sparsifier of $G$ that has $O(n/\delta)$ vertices and $O(n)$ edges w.h.p. upon request.
Each update and query takes either
\begin{enumerate}
    \item worst-case $\hatO(1)$ and $\hatO(n)$ time respectively w.h.p., assuming an \emph{adaptive} adversary,\label{theorem:main-worst-case-adaptive}
    or 
    \item amortized $\tildeO(1)$ and $\tildeO(n)$ time respectively w.h.p., assuming an \emph{adaptive} adversary,\label{theorem:main-amortized-adaptive}
    or 
    \item worst-case $\tildeO(1)$ and $\tildeO(n)$ time respectively w.h.p., assuming an \emph{oblivious} adversary.\label{theorem:main-worst-case-oblivious}
\end{enumerate}
\end{restatable}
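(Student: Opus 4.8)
The plan is to maintain, alongside the dynamic graph $G$, a few auxiliary structures that are individually cheap to update --- a dynamic spanning forest (DSF) of $G$, a dynamic cut-set structure (DCS) on $G$, a degree array equipped with a minimum-query so that the current minimum degree $\delta$ is available, and, for every vertex, a balanced search tree over its incident edges that supports drawing a uniformly random incident edge --- and then to show that these suffice to execute the \emph{static} random 2-out contraction sparsifier of Ghaffari, Nowicki, and Thorup~\cite{DBLP:conf/soda/Ghaffari0T20} from scratch in time $\hatO(n)$ (respectively $\tildeO(n)$), rather than $\tildeO(m)$. An edge insertion or deletion then only updates the DSF, the DCS, the incident-edge trees of the two affected endpoints, and the degree structures; since the DCS forest is kept empty between queries, no special care is needed there, and each of these updates costs $\hatO(1)$ worst case (respectively $\tildeO(1)$), so the stated update bounds reduce directly to the guarantees of the chosen DSF and DCS.

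For a query we first draw, for each vertex, two uniformly random incident edges from its search tree; let $H$ be the resulting subgraph, with at most $2n$ edges, and compute its connected components $C_1,\dots,C_{n'}$ by a single pass in $O(n)$ time. By~\cite{DBLP:conf/soda/Ghaffari0T20} we have $n'=O(n/\delta)$ w.h.p., and the multigraph $G'$ obtained from $G$ by contracting each $C_i$ is an NMC sparsifier of $G$ w.h.p.; in particular $G'$ has minimum cut $\lambda$ (the minimum cut of $G$, with $\lambda\le\delta$) and its minimum cuts correspond exactly to the non-trivial minimum cuts of $G$. The only remaining issue is that $G'$ can still have $\Theta(m)$ edges, which we cannot afford to touch, so we replace $G'$ by a Nagamochi--Ibaraki $(\delta+1)$-edge-connectivity certificate $\hatG$ of it, namely the union of $\delta+1$ iteratively peeled spanning forests $F_1,\dots,F_{\delta+1}$ of $G'$. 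Because $\hatG\subseteq G'$ and a $(\delta+1)$-certificate preserves every cut of value at most $\delta$ exactly while keeping every other cut strictly above $\delta\ge\lambda$, the multigraph $\hatG$ has exactly the same minimum cuts as $G'$, hence is again an NMC sparsifier of $G$; it has $n'=O(n/\delta)$ vertices and $(\delta+1)(n'-1)=O(n)$ edges, as required.

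The peeled forests are where the DSF and DCS come in. We obtain $F_1$ for free by taking the spanning forest of $G$ maintained by the DSF, contracting the $C_i$'s, discarding self-loops, and extracting a spanning forest of $G'$ from its at most $n-1$ image edges in $O(n)$ time. We then load into the DCS forest $F$ a spanning tree of each $H$-component together with (the $G$-edges underlying) $F_1$, so that the trees of $F$ are exactly the clusters of $G'$ formed so far; a DCS query at a vertex $v$ now returns a $G$-edge joining two distinct clusters, i.e.\ an edge of $G'\setminus F_1$ connecting different clusters. In round $j\ge 2$ we grow $F_j$ by a Bor\r{u}vka/Kruskal-style loop over a union-find of the current clusters: repeatedly take an unfinished cluster, query one of its vertices, and either add the returned edge to $F_j$ and to $F$ (merging two clusters) or mark the cluster finished; each successful query causes a merge and each unsuccessful one finalizes a cluster, so only $O(n')$ DCS operations are spent per round. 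Crucially, at the end of round $j$ the DCS forest already equals the lift of $F_1\cup\dots\cup F_j$, which is exactly what round $j+1$ needs, so we never reload; over all $\delta+1$ rounds this is $O((\delta+1)\,n/\delta)=O(n)$ DCS operations plus $O(n)$ forest edits to finally clear $F$. Multiplying by the per-operation cost of the DCS ($\hatO(1)$ worst case, or $\tildeO(1)$ amortized or against an oblivious adversary) yields the claimed query bounds, and emitting $\hatG$ takes an additional $O(n)$ time.

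The three items of the theorem are obtained by instantiating the DSF and DCS with, respectively, a deterministic worst-case $\hatO(1)$ construction, a deterministic amortized $\tildeO(1)$ construction, and a randomized worst-case $\tildeO(1)$ construction that is correct against an oblivious adversary; the incident-edge trees and degree structures are deterministic and only consume fresh coins at query time. Since the sparsifier is rebuilt from scratch with independent randomness at every query and the w.h.p.\ guarantee of~\cite{DBLP:conf/soda/Ghaffari0T20} holds for any fixed current graph, an adaptive adversary who sees previous outputs gains nothing, so the adversary model of the whole data structure is exactly that of the chosen DSF/DCS. I expect the main obstacle to be the content of the third paragraph: showing that the DCS, which is defined over $G$, faithfully simulates the peeling of spanning forests in the \emph{contracted} multigraph $G'$ --- in particular that the cluster/$F$-tree bookkeeping stays consistent as forests accumulate across rounds and that each round really costs only $O(n')$ queries rather than an amount depending on the (possibly $\Theta(m)$) edge count of $G'$ --- together with a careful accounting that the $O(n)$ DCS forest edits performed during a query stay within the $\hatO(n)$/$\tildeO(n)$ budget in all three adversary regimes.
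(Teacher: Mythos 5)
There is a genuine gap: you build your sparsifier from a \emph{single} random $2$-out contraction $G'$, but Ghaffari et al.\ only guarantee that a single $2$-out contraction preserves any one fixed non-trivial minimum cut with \emph{constant} probability (their Theorem~2.4, reproduced as Theorem~\ref{theorem:2outcontraction}), not with high probability. Consequently, with constant probability your $G'$ has already destroyed some non-trivial minimum cut of $G$, and the Nagamochi--Ibaraki certificate you take of it inherits the failure; the output is an NMC sparsifier only with constant probability. This is precisely why the construction behind Theorem~\ref{theorem:contractedgraph} does not stop at one contraction: it draws $q=O(\log n)$ \emph{independent} $2$-out contractions $G_1,\dots,G_q$, computes a $(\delta+1)$-forest decomposition $\hatG_i$ of each using the DCS (Proposition~\ref{proposition:kforests}), and only then contracts an edge if it appears in fewer than $r$ of the $\hatG_i$'s. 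That \emph{voting} step is where the w.h.p.\ guarantee comes from, and it imposes the additional task of assembling the final contraction from a set of edges defined only implicitly by the vote, which the paper handles with a separate DSF-based routine (Proposition~\ref{proposition:contract}). Your proposal has no analogue of the voting step or of this final assembly, so it cannot reach the stated success probability.

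There is also a smaller but independent flaw in your DCS bookkeeping: you state that after round $j$ the DCS forest $F$ ``equals the lift of $F_1\cup\dots\cup F_j$.'' But $F_1$ is a spanning forest of $G'$, so every edge of $F_2$ closes a cycle with $F_1$ inside $G'$; the lift of $F_1\cup F_2$ together with the per-component spanning trees is therefore not acyclic, and the DCS data structure cannot hold it as its forest. The paper avoids this by deleting each round's discovered forest edges from the underlying \emph{graph} maintained by DCS (via $\mathtt{delete}_G$), which simultaneously removes them from $F$ and returns $F$ to a spanning forest of $G[\Econ]$ before the next round; the extracted edges are kept in an external list and restored only after all $k$ rounds. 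So the ``never reload'' shortcut is exactly the step you cannot skip.
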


Recall that $\tildeO$ hides polylogarithmic factors and $\hatO$ hides subpolynomial (i.e., $n^{o(1)}$) factors.
The three cases of Theorem~\ref{theorem:main} result from using different data structures to realize our required DSF and DCS data structures. We show that with minimal overhead both data structures can be reduced to a dynamic minimum spanning forest data structure, for which many constructions have been proposed in the literature.
For Case~\ref{theorem:main-worst-case-adaptive}, we make use of the fully dynamic minimum spanning forest algorithm of Chuzhoy, Gao, Li, Nanongkai, Peng, and Saranurak~\cite{DBLP:conf/focs/ChuzhoyGLNPS20} -- the only spanning forest data structure known so far that can provide worst-case time guarantees against an adaptive adversary.
Case~\ref{theorem:main-amortized-adaptive} is the result of substituting the deterministic data structure of Holm, de Lichtenberg, and Thorup~\cite{DBLP:journals/jacm/HolmLT01} and case~\ref{theorem:main-worst-case-oblivious} results from using  the randomized data structure of Kapron, King, and Mountjoy~\cite{DBLP:conf/soda/KapronKM13} instead.

As a first application  we can efficiently provide a cactus representation of the minimum cuts of a simple graph upon request in the fully dynamic setting. The cactus representation of all minimum cuts of a static graph is known to be computable in near linear time $\tildeO(m)$ using randomized algorithms \cite{DBLP:conf/soda/HeHS24a, DBLP:conf/soda/KargerP09, DBLP:journals/dam/LoST21}. 
With deterministic algorithms, the best known time bound is $m^{1 + o(1)}$ \cite{DBLP:conf/soda/HeHS24a}. We are not aware of any previous work on providing the cactus for fully dynamic graphs in sublinear time per query. Specifically, we show the following:

\begin{restatable}{theorem}{cactustheorem}\label{theorem:cactus}
Let $G$ be a fully dynamic simple graph with $n$ vertices. There is a data structure that w.h.p. gives a cactus representation of all minimum cuts of $G$.
Each update and query takes either
\begin{enumerate}
    \item worst-case $\hatO(1)$ and $\hatO(n)$ time respectively w.h.p., assuming an \emph{adaptive} adversary,\label{theorem:cactus-worst-case-adaptive}
    or 
    \item amortized $\tildeO(1)$ and $\tildeO(n)$ time respectively w.h.p., assuming an \emph{adaptive} adversary,\label{theorem:cactus-amortized-adaptive}
    or 
    \item worst-case $\tildeO(1)$ and $\tildeO(n)$ time respectively w.h.p., assuming an \emph{oblivious} adversary.\label{theorem:cactus-worst-case-oblivious}
\end{enumerate}
\end{restatable}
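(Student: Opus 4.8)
The plan is to reduce computing a cactus of $G$ to computing a cactus of the NMC sparsifier $\hatG$ returned by Theorem~\ref{theorem:main}, and then to repair the trivial minimum cuts, which $\hatG$ need not preserve. Recall that $\hatG$ is obtained from $G$ by contracting an edge set that is w.h.p.\ \emph{safe}, i.e.\ crosses no non-trivial minimum cut of $G$; hence the cuts of $\hatG$ are in value-preserving bijection with the \emph{contraction-respecting} cuts of $G$ (those that do not separate two vertices placed in the same super-vertex), so in particular $\lambda(\hatG)\ge\lambda(G)$, and every non-trivial minimum cut of $G$ is contraction-respecting. On a query I would: obtain $\hatG$ and the vertex map $\pi\colon V(G)\to V(\hatG)$ from Theorem~\ref{theorem:main} (the sparsifier is a contraction of $G$, and $\pi$ has size $O(n)$); read off $\delta$ and $T:=\{v:\deg_G(v)=\delta\}$ from a degree array kept up to date in $O(1)$ worst-case time per update; compute $\lambda(\hatG)$ by a static minimum-cut algorithm on $\hatG$; and set $\lambda:=\min(\delta,\lambda(\hatG))$. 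The first point is that $\lambda=\lambda(G)$: here $\lambda\ge\lambda(G)$ is immediate from $\lambda(\hatG)\ge\lambda(G)$ and from the cut of value $\delta$ at a minimum-degree vertex, while $\lambda\le\lambda(G)$ follows by distinguishing whether some minimum cut of $G$ is trivial (then $\lambda(G)=\deg_G(v)\ge\delta$, hence $\lambda(G)=\delta$ since always $\lambda(G)\le\delta$) or non-trivial (then it survives in $\hatG$ with value $\lambda(G)$, so $\lambda(\hatG)\le\lambda(G)$).

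Suppose first that $\lambda(\hatG)>\delta$. Then $\lambda(G)=\delta$, and $G$ has no non-trivial minimum cut (one would survive in $\hatG$ with value $\delta<\lambda(\hatG)$), so the minimum cuts of $G$ are exactly the singletons $\{v\}$ with $v\in T$, which form a cross-free family. I would then directly output the \emph{star} cactus: one leaf node $\{v\}$ for each $v\in T$, one further node holding $V(G)\setminus T$ (empty if $T=V(G)$), and a bridge of multiplicity $\delta$ from that node to each leaf, so that removing a bridge realizes the cut $\{v\}$ and no other cuts arise. This takes $O(n)$ time.

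Otherwise $\lambda(\hatG)\le\delta$, so $\lambda(G)=\lambda(\hatG)$ and the minimum cuts of $\hatG$ are in value-preserving bijection with the contraction-respecting minimum cuts of $G$, which are precisely the non-trivial minimum cuts of $G$ together with those trivial minimum cuts $\{v\}$ (necessarily with $v\in T$) whose super-vertex $\pi(v)$ is a singleton. I would compute a cactus of $\hatG$ by a static near-linear-time algorithm~\cite{DBLP:conf/soda/HeHS24a,DBLP:conf/soda/KargerP09,DBLP:journals/dam/LoST21} (or the deterministic $m^{1+o(1)}$ variant of~\cite{DBLP:conf/soda/HeHS24a}), which on the $O(n)$-edge graph $\hatG$ costs $\tildeO(n)$, resp.\ $\hatO(n)$, time, and then replace each cactus node by the $\pi$-preimage of its super-vertices; the result is a cactus representing precisely the contraction-respecting minimum cuts of $G$. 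It remains to graft on the trivial minimum cuts not yet represented, namely $\{v\}$ for each $v\in T$ with $\deg_G(v)=\lambda(G)$ that currently lies in a cactus node $N$ which is not already the singleton $\{v\}$. The key structural fact is that such a singleton cut $\{v\}$ is cross-free with \emph{every} minimum cut of $G$ (one of the four quadrants it induces is empty), so I may replace $N$ by a node $N'$ inheriting all former cactus edges of $N$ and the vertices of $N$ not being peeled off, and attach to $N'$ a fresh leaf node $\{v\}$ joined to $N'$ by a new bridge, for each such $v$; removing that bridge then yields exactly the cut $\{v\}$, while all previously represented cuts are preserved. Carried out for all relevant vertices this costs $O(n)$ in total, and the final object is a cactus of $G$. (If $\lambda(G)<\delta$ then no $v\in T$ satisfies $\deg_G(v)=\lambda(G)$, so nothing is grafted and the relabeled cactus of $\hatG$ is already a cactus of $G$.)

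For the bounds, maintaining the degree array adds $O(1)$ worst-case time per update, so update times match Theorem~\ref{theorem:main}; a query runs the Theorem~\ref{theorem:main} query ($\hatO(n)$ or $\tildeO(n)$, depending on the case), one static minimum-cut and at most one static cactus computation on the $O(n)$-edge graph $\hatG$ ($\tildeO(n)$ randomized or $\hatO(n)$ deterministic), and $O(n)$ further work, giving the three stated bounds. As each query recomputes its answer entirely from scratch from the current state of the (adaptive-adversary-safe) forest data structures underlying Theorem~\ref{theorem:main}, using randomness never revealed to the adversary, a union bound over the $1/\poly(n)$ failure probabilities of the sparsifier and the static subroutines keeps correctness w.h.p.\ against an adaptive adversary; for Case~\ref{theorem:cactus-worst-case-oblivious} the oblivious guarantee carries over verbatim. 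The step I expect to need the most care is the last case: proving that the trivial minimum cuts missing from $\hatG$ are exactly those reinsertable by independent bridge subdivisions, which rests on the cross-freeness of singleton cuts with all minimum cuts and on the exact correspondence between minimum cuts of $\hatG$ and contraction-respecting minimum cuts of $G$. (If $G$ is disconnected, detected by the spanning-forest data structure, or $n$ is below a constant, a cactus is produced by an easy special case.)
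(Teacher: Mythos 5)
Your proof is correct and takes essentially the same route as the paper: maintain the data structure of Theorem~\ref{theorem:main}, at query time build the NMC sparsifier $\hatG$, run a static $\tildeO(|\hatG|)$-time cactus algorithm on $\hatG$, and then repair trivial minimum cuts by the trichotomy on $\lambda(\hatG)$ versus $\delta$. The case split you give ($\lambda(\hatG)>\delta$ yields a star of singletons; $\lambda(\hatG)\le\delta$ yields the relabelled cactus of $\hatG$, with singleton nodes grafted on precisely when $\lambda(G)=\delta$) matches the paper's three-way split $\lambda<\delta$, $\lambda=\delta$, $\delta<\lambda$, and your justification via cross-freeness of singleton cuts and the value-preserving bijection between minimum cuts of $\hatG$ and contraction-respecting minimum cuts of $G$ is the same underlying argument (the paper defers it to~\cite{DBLP:journals/jacm/KawarabayashiT19}). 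The only cosmetic deviation is that you attach each new singleton node to its parent via a bridge, whereas the paper uses a $2$-cycle (two parallel cactus edges); both yield a valid cactus realizing exactly the additional cut $\{v\}$, so this is immaterial.
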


Theorem~\ref{theorem:cactus} provides an improvement over one of the main technical components in \cite{DBLP:conf/soda/GoranciHNSTW23}. Specifically, Goranci et al.~\cite{DBLP:conf/soda/GoranciHNSTW23}, provide a method to efficiently maintain an NMC sparsifier of a dynamic simple graph, based on the $2$-out contractions of Ghaffari et al.~\cite{DBLP:conf/soda/Ghaffari0T20}. 
However, their main technical result~(Theorem~3.7 in~\cite{DBLP:conf/soda/GoranciHNSTW23}) has several drawbacks. First, it needs to know an estimated upper bound $\hat{\delta}$ on the minimum degree of any graph that occurs during the sequence of updates of $G$.
Second, they try to \emph{maintain} the sparsifier during the updates. This results in an update time $\tilde O(\hat{\delta})$, and forces them to ``hide'' their sparsifier from an adversary, i.e., they can expose their sparsifier only if they work against an oblivious adversary. 
Thus, to make their minimum cut algorithm work against an adaptive adversary, they return only the \emph{value} of the minimum cut.
In contrast, our algorithm computes a sparsifier from scratch upon request, and can therefore provide a cactus representation of \emph{all} minimum cuts, even against an adaptive adversary. 

\clearpage

Notice that we provide a different trade-off in reporting the minimum cut: We have an update time of $\hatO(1)$ and a query time of $\hatO(n)$, whereas Theorem 1.1 of Goranci et al.~\cite{DBLP:conf/soda/GoranciHNSTW23} has an update time of $\tildeO(n)$ and query time $O(1)$. This trade-off is never worse (modulo the subpolynomial factors) if one caches the result of a query and answers queries from the cache if the graph did not change. Furthermore, upon request, we can provide a minimum cut explicitly, and not just its value.

\begin{table}
 % \hspace*{0.4cm}
\renewcommand{\arraystretch}{1.45}\small
\begin{tabular}{ |c|c|c|c| } 
 \hline
 {Algorithm} & {Time} & {Type} & {Range of $k$}\\ \hline\hline
 {Chechik et al., Forster et al.~\cite{DBLP:conf/soda/ChechikHILP17, DBLP:conf/soda/ForsterNYSY20}} & $\tildeO(m+k^{O(k)}n^{3/2})$ & Det. & $k\in\mathbb{N}$\\ \hline
  {Forster et al.~\cite{DBLP:conf/soda/ForsterNYSY20}} & $\tildeO(m+k^3n^{3/2})$ & Las Vegas Rnd. & $k\in\mathbb{N}$ \\ \hline
 {Henzinger et al.~\cite{DBLP:conf/icalp/HenzingerKL15}} & $\tildeO(n^2)$ & Det. & $k\in\mathbb{N}$\\ \hline
 {Thorup, Georgiadis et al.~\cite{DBLP:journals/combinatorica/Thorup07,DBLP:journals/corr/abs-2211-06521}} & $\tildeO(m+k^8n^{3/2})$ & Det. & $k=\log^{O(1)}n$\\ \hline
 {Saranurak and Yuan~\cite{DBLP:conf/esa/SaranurakY23}} & $O(m+kn^{1+o(1)})$ & Det. & $k= \log^{o(1)}{n}$\\ \hline
 {Nalam and Saranurak~\cite{DBLP:conf/soda/NalamS23}} & $\tilde{O}(m\cdot\min\{m^{3/4},n^{4/5}\})$ & Monte Carlo Rnd. & $k\in\mathbb{N}$\\ \hline
 {\textbf{This paper}} & $\tilde{O}(m+{n^2}/k)$ & Monte Carlo Rnd. & $k\in\mathbb{N}$\\ \hline
\end{tabular}

\caption{\small Best known time bounds for computing the maximal $k$-edge-connected subgraphs in undirected graphs in the static setting. The $\tildeO$ expression hides polylogarithmic factors. }\label{table:table1}
\end{table}

As a second application of our main result, we improve the time bounds for computing the (vertex sets of the) maximal $k$-edge-connected subgraphs in a simple undirected graph. 
Specifically, we have the following.

\begin{restatable}{theorem}{maxkedgesubgraphs}\label{theorem:computingmaximal}
Let $G$ be a simple graph with $n$ vertices and $m$ edges, and let $k$ be a positive integer. We can compute the maximal $k$-edge-connected subgraphs of $G$ w.h.p. in $\tildeO(m+{n^2}/k)$ time w.h.p.
\end{restatable}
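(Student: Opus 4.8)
The plan is to work on the NMC sparsifier of Theorem~\ref{theorem:main}, which has only $O(n/k)$ vertices and $O(n)$ edges, and run the classical ``find a small cut, split, recurse'' decomposition on it, using the dynamic data structure to rebuild a fresh sparsifier cheaply each time the recursion needs one. First, iteratively delete every vertex of degree $<k$ (none lies in a $k$-edge-connected subgraph, which has minimum degree $\ge k$); this takes $O(m)$ time and yields $G'$ with minimum degree $\ge k$, with the same maximal $k$-edge-connected subgraphs as $G$, and $n':=|V(G')|\le\min(n,2m/k)$. The structural fact that drives everything: in a graph of minimum degree $\ge k$ every cut of value $<k$ is non-trivial, so for a connected such graph $H$ with NMC sparsifier $\hat H$ — which preserves all non-trivial minimum cuts of $H$ and whose every cut lifts to a cut of $H$ of the same value, trivial ones having value $\ge k$ — one has $\mathrm{mincut}(\hat H)\ge k$ iff $H$ is $k$-edge-connected, and otherwise every minimum cut of $\hat H$ lifts to a cut of $H$ of value $<k$ that splits no contracted part.

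\emph{Decomposition.} Maintain $G'$ in the data structure of Theorem~\ref{theorem:main}, in a regime with $\tildeO(1)$ update and $\tildeO(n)$ query time (cases~\ref{theorem:main-amortized-adaptive} or~\ref{theorem:main-worst-case-oblivious}). Then recursively decompose: for a connected piece $H$, query an NMC sparsifier $\hat H$ and, by a near-linear randomized cactus algorithm on its $O(n)$ edges, compute in $\tildeO(n)$ time $\mathrm{mincut}(\hat H)$ together with all of its minimum cuts; if this value is $\ge k$, report $V(H)$ (which is $k$-edge-connected by the structural fact); otherwise shatter $H$ along all these minimum cuts by deleting the crossing edges, re-prune each resulting piece to minimum degree $\ge k$ (again deletions only), and recurse on the surviving components. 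Correctness follows from the structural fact: a maximal $k$-edge-connected subgraph is connected, avoids every pruned vertex and is split by no cut of value $<k$, so the recursion never separates one; and the process terminates because each shattering step is a permanent edge deletion.

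\emph{Running time.} Preprocessing and data-structure maintenance cost $O(m)$ and $\tildeO(m)$ respectively, and the total number of edge deletions is $O(m)$. For the rest, every recursive call either reports a $k$-edge-connected component — these have $\ge k$ vertices and are vertex-disjoint, so there are $O(n'/k)$ of them — or shatters its piece along all minimum cuts which, together with re-pruning, strictly raises the non-trivial minimum-cut value of every surviving piece; hence the recursion has depth $O(k)$, and a charging argument (each call assigned either to a vertex of a final component or to a small batch of permanent deletions) bounds the number of calls that issue a sparsifier query by $\tildeO(n/k)$. Each such query plus cactus computation costs $\tildeO(n)$, for $\tildeO(n^2/k)$ in total, giving an overall running time of $\tildeO(m+n^2/k)$. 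The sparsifier and cactus subroutines are correct w.h.p., so a union bound over the $\poly(n)$ calls keeps the output correct w.h.p.

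\emph{Main obstacle.} The design is routine once the structural fact is isolated; the difficulty is in the analysis. Two points need care. First, the NMC sparsifier a priori preserves only \emph{minimum} cuts, whereas $k$-edge-connectivity depends on all cuts of value $<k$; this is handled by recomputing the sparsifier from scratch at every recursive call (so each call sees a genuine sparsifier of its own current piece), which is affordable precisely because the data structure answers a query in $\tildeO(n)$ rather than $\tildeO(m)$ time. Second, one must bound the number of these recomputations by $\tildeO(n/k)$: this relies on the $O(k)$ bound on the recursion depth — i.e.\ that shattering along all minimum cuts plus re-pruning increases the edge-connectivity of every surviving piece — together with the charging argument, so that $\tildeO(n)$ per recomputation multiplies out to $\tildeO(n^2/k)$ and not to $\tildeO(km)$ or $\tildeO(n^2)$. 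Establishing this depth bound and the matching charging is the technical heart of the proof; the fully dynamic version claimed in the introduction then follows immediately, since the entire decomposition is recomputed from scratch at query time.
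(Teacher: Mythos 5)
You take a genuinely different route from the paper, and it contains a gap in the running-time analysis.

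The paper's algorithm (Algorithm~\ref{algorithm:computemaximal}) does not compute a cactus at all: on each active component $S$ it queries the sparsifier $\widehat S$ of Corollary~\ref{corollary:ccNMC}, runs Karger's min-cut algorithm to get a \emph{single} minimum cut, removes that one cut if its value is $<k$, and recurses on the (at most two, after pruning) resulting pieces. The bound on the number of recursive calls comes from a single structural fact already stated in Section~\ref{sec:computingmaximal}: in a simple graph of minimum degree $\ge k$, \emph{both} sides of any cut with fewer than $k$ edges have $\ge k$ vertices (Kawarabayashi--Thorup). This makes the potential $\sum_S(|S|-k)$, summed over active components, drop by at least $k$ with every split, and it is initially $\le n$; so there are $O(n/k)$ splitting calls, and at most $O(n/k)$ terminating calls because the reported subgraphs are disjoint and each has $\ge k+1$ vertices. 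No depth argument is needed, and no cactus is needed.

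Your proposal instead shatters each piece along \emph{all} of its minimum cuts using the cactus, and the crux of your time bound is the claim that ``shattering along all minimum cuts plus re-pruning strictly raises the non-trivial minimum-cut value of every surviving piece,'' from which you conclude a recursion depth of $O(k)$. This claim is asserted, not proven, and it is not obviously true: a surviving piece $P$ is a cactus atom, and a cut of $H[P]$ of value $\le\lambda$ need not correspond to a minimum cut of $H$, because $\partial_H(A)$ may contain additional edges from $A$ to $V(H)\setminus P$ that were deleted in the shattering step; re-pruning to minimum degree $\ge k$ does not evidently repair this. Moreover, even granting depth $O(k)$, this does not yield $\tildeO(n/k)$ sparsifier queries on its own: with up to $O(n/k)$ disjoint pieces per level and $O(k)$ levels you only get $O(n)$ calls, i.e.\ $\tildeO(n^2)$ total, which is too large. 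The remaining load is placed on a ``charging argument'' that you gesture at but do not carry out; as written it does not close the gap. You have all the ingredients for the paper's simpler argument (you even note that reported pieces have $\ge k$ vertices); what you are missing is to also apply the lower bound of $\ge k$ vertices to \emph{both sides of every $<k$ cut you remove}, which immediately gives the $O(n/k)$ bound on \emph{all} recursive calls and removes the need for either a cactus or a depth bound.

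Everything else in your proposal is sound and matches the paper: the pruning step, the observation that all $<k$ cuts are non-trivial in a pruned piece so the NMC sparsifier faithfully reflects whether a piece is $k$-edge-connected, the choice of the amortized (or oblivious) variant of Theorem~\ref{theorem:main} to get $\tildeO(1)$ updates and $\tildeO(n)$ queries, the $\tildeO(m)$ cost for maintaining the data structure across $O(m)$ deletions, and the union bound over $\poly(n)$ randomized subroutine calls.
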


For comparison, Table~\ref{table:table1} gives an overview of the best known algorithms for computing the maximal $k$-edge-connected subgraphs in undirected graphs.
Thorup~\cite{DBLP:journals/combinatorica/Thorup07} does not deal with the computation of the maximal $k$-edge-connected subgraphs, but the result is a consequence of his fully dynamic minimum cut algorithm as observed in \cite{DBLP:journals/corr/abs-2211-06521}. The algorithm in \cite{DBLP:conf/soda/NalamS23} is the only one that holds for weighted graphs (with integer weights), and it has no dependency on $k$ in the time bound. Notice that our algorithm improves over all prior results for $k =\Omega( n^{1/8})$ and $m =\Omega(n^{8/7})$.

With a reduction to simple graphs, this implies the following bounds for computing the maximal $k$-edge-connected 
subgraphs of multigraphs.

\begin{corollary}
\label{corollary:mult}
Let $G$ be a multigraph with $n$ vertices and $m$ edges, and let $k$ be a positive integer. We can compute the maximal $k$-edge-connected subgraphs of $G$ w.h.p. in $\tildeO(m+k^2n+kn^2)$ time w.h.p.
\end{corollary}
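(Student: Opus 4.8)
The plan is to reduce the multigraph problem to the simple‑graph case and apply Theorem~\ref{theorem:computingmaximal}. First I would normalise $G$: let $G'$ be obtained from $G$ by capping the multiplicity of every parallel class at $k$. For any $S\subseteq V(G)$, the induced subgraph $G[S]$ is $k$-edge-connected iff $G'[S]$ is, since a cut of $G[S]$ that separates the endpoints of a parallel class of multiplicity at least $k$ already has value at least $k$ in $G'$, and every other cut has the same value in $G$ and in $G'$. Hence $G$ and $G'$ have exactly the same maximal $k$-edge-connected subgraphs, $G'$ can be computed in $\tildeO(m)$ time, and $|E(G')|\le\min\{m,k\binom n2\}$.

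Next I would blow $G'$ up into a simple graph $H$ via a clique gadget: replace each vertex $v$ of $G'$ by a clique $C_v$ on $k+1$ fresh vertices $c_v^1,\dots,c_v^{k+1}$, and for the $i$-th edge of a parallel class between $u$ and $v$ (there are at most $k<k+1$ such edges, so the index is valid) add the edge $c_u^i c_v^i$. It is immediate that $H$ is simple, has $(k+1)n=O(kn)$ vertices and $\binom{k+1}{2}n+|E(G')|=O(k^2n)+\min\{m,k\binom n2\}$ edges, and can be built in time linear in its size.

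The core of the argument is to relate the maximal $k$-edge-connected subgraphs of $H$ to those of $G'$. I would use two facts. First, if $A\cap B\neq\emptyset$ and $H[A]$, $H[B]$ are both $k$-edge-connected, then $H[A\cup B]$ is $k$-edge-connected (a routine cut argument: a cut of $H[A\cup B]$, restricted to whichever of $A$, $B$ meets both sides through a common vertex, is a genuine cut there and hence has value at least $k$). Since $H[C_v]=K_{k+1}$ is $k$-edge-connected, this forces every maximal $k$-edge-connected subgraph $T$ of $H$ to be a union of whole clusters: if $T$ intersected $C_v$ without containing it, $T\cup C_v$ would be a strictly larger $k$-edge-connected subgraph. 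Second, for $S\subseteq V(G')$ with $|S|\ge2$, $H\big[\bigcup_{v\in S}C_v\big]$ is $k$-edge-connected iff $G'[S]$ is: a cut of $H\big[\bigcup_{v\in S}C_v\big]$ that splits some clique $C_v$ into parts of sizes $a$ and $k+1-a$ cuts $a(k+1-a)\ge k$ edges, while a cut that respects every clique boundary has the same value as the corresponding cut of $G'[S]$. Combining these with maximality yields a bijection between the maximal $k$-edge-connected subgraphs of $G'$ on at least two vertices and the maximal $k$-edge-connected subgraphs of $H$ that span at least two clusters; the remaining maximal $k$-edge-connected subgraphs of $H$ are single clusters $C_v$, which we discard. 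Thus the answer for $G$ is read off from the answer for $H$ in $O(|V(H)|)$ time.

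Finally, applying Theorem~\ref{theorem:computingmaximal} to $H$ costs $\tildeO\big(|E(H)|+|V(H)|^2/k\big)=\tildeO\big(k^2n+\min\{m,kn^2\}+(kn)^2/k\big)=\tildeO(m+k^2n+kn^2)$, and adding the $\tildeO(m)$ spent on normalisation and the linear time to build $H$ and translate the output back does not change this bound. I expect the main point to get right is the cluster-boundary correspondence of the previous paragraph — in particular that the clique size $k+1$ is simultaneously small enough to keep $|E(H)|$ at $O(k^2n)$ and large enough (via $a(k+1-a)\ge k$ for $1\le a\le k$) that cutting through a cluster already costs at least $k$; everything else is bookkeeping.
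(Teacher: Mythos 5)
Your proposal is correct and uses essentially the same reduction as the paper (Lemma~\ref{lemma:reduction}): normalise parallel multiplicities, then replace each vertex by a clique of size $k+1$ with cross-edges representing the original (multi-)edges, and apply Theorem~\ref{theorem:computingmaximal} to the resulting simple graph. The only cosmetic difference is the normalisation step: the paper contracts any pair of vertices joined by at least $k$ parallel edges (iterating until every parallel class has multiplicity $<k$), while you cap every multiplicity at $k$; both leave multiplicities at most $k$ and preserve the maximal $k$-edge-connected subgraphs, and both lead to $|E(H)| = m + O(k^2 n)$ and $|V(H)| = O(kn)$. You also spell out the two cut-counting lemmas (that a cut crossing some clique $K_{k+1}$ already costs at least $a(k+1-a)\ge k$, and that a clique-respecting cut of $H$ has the same value as the corresponding cut of $G'$) that the paper summarises as ``essentially has the same maximal $k$-edge-connected subgraphs''; this is the right explicit justification and is the main nontrivial content of the reduction.
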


Notice that Corollary~\ref{corollary:mult} provides an improvement compared to the other algorithms in Table~\ref{table:table1}, depending on $k$ and the density of the graph. For example, if $\delta$ and $\epsilon$ are two parameters satisfying $1/4\leq\delta\leq 1$, $16/15\leq\epsilon\leq 2$ and $\delta\leq\epsilon-6/5$, then we have an improvement in the regime where $m=\Theta(n^\epsilon)$ and $k=\Theta(n^\delta)$ against all previous algorithms in Table~\ref{table:table1} that work for multigraphs (i.e., except for Henzinger et al.~\cite{DBLP:conf/icalp/HenzingerKL15}, which works only for simple graphs).

Finally, the method that we use in Theorem~\ref{theorem:computingmaximal} for computing the maximal $k$-edge-connected subgraphs in static graphs extends to the case of dynamic graphs.

\begin{theorem}
\label{theorem:dynamicsubgraphs}
Let $G$ be a fully dynamic simple graph with $n$ vertices. There is a data structure
that can provide the maximal $k$-edge-connected subgraphs of $G$ at any point in time, with high probability, for any integer $k < n$.
Each update and query takes either
\begin{enumerate}
    \item worst-case $\hatO(1)$ and $\hatO(n^2/k)$ time respectively w.h.p, assuming an \emph{adaptive} adversary, or
    \item amortized $\tildeO(1)$ and $\tildeO(n^2/k)$ time respectively w.h.p, assuming an \emph{adaptive} adversary,
    or 
    \item worst-case $\tildeO(1)$ and $\tildeO(n^2/k)$ time respectively w.h.p, assuming an \emph{oblivious} adversary.
\end{enumerate}
\end{theorem}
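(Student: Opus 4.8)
The plan is to run, at query time, the static procedure behind Theorem~\ref{theorem:computingmaximal} on top of the dynamic sparsifier of Theorem~\ref{theorem:main}. Concretely, we maintain the data structure of Theorem~\ref{theorem:main} (in whichever of the three regimes we want) on the fully dynamic graph $G$, together with an array of vertex degrees; the degree array adds only $O(1)$ per edge update, so the update times are exactly those of Theorem~\ref{theorem:main}. On a query with parameter $k$ we first call the data structure to obtain, in $\hatO(n)$ time (resp.\ $\tildeO(n)$ time in the other two regimes), an NMC-type sparsifier $\hatG$ of $G$, and then we execute the static maximal-$k$-edge-connected-subgraphs routine entirely on $\hatG$, so that the expensive, $m$-dependent part of the static algorithm is never run on $G$ itself.

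The key point is that in the static algorithm of Theorem~\ref{theorem:computingmaximal} the $\tildeO(m)$ term is incurred only by (a) reading $G$ and reducing it to minimum degree $\ge k$ and (b) constructing the sparsifier from scratch, while all remaining work is $\tildeO(n^2/k)$ and is performed on contracted graphs of total size $O(n)$. Step~(b) is replaced by the $\hatO(n)$-time query to the data structure. Step~(a) is performed \emph{inside} $\hatG$: a singleton super-vertex of $\hatG$ of degree below $k$ is exactly a vertex of $G$ of degree below $k$ and cannot lie in any $k$-edge-connected subgraph, so we delete it; iterating this cascade costs $O(|E(\hatG)|)=O(n)$, and the surviving contracted graph has minimum degree $\ge k$, hence only $O(n/k)$ super-vertices (its degree sum is $O(n)$). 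What remains is a multigraph with $O(n/k)$ vertices and $O(n)$ edges, on which the recursive scheme — compute a minimum cut, report the current component if its value is $\ge k$, otherwise split along the cut, clean up newly low-degree super-vertices, and recurse — has a recursion tree with $O(n/k)$ leaves and hence $O(n/k)$ nodes, depth $O(n/k)$, and $O(n)$ edges summed over each level, for a total of $\tildeO(n^2/k)$. Adding up, a query costs $\hatO(n)+\tildeO(n^2/k)=\hatO(n^2/k)$ in the first regime, and analogously $\tildeO(n^2/k)$ in the other two.

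The main obstacle is the correctness of this reduction to $\hatG$: the maximal $k$-edge-connected subgraphs of $G$ must be exactly the pull-backs of those of the contracted multigraph, which requires that no super-vertex of $\hatG$ straddle a cut of $G$ of value below $k$ — a stronger property than the NMC guarantee, which only protects non-trivial \emph{minimum} cuts. Resolving this is the crux. One option is to show that the DSF/DCS machinery underlying Theorem~\ref{theorem:main} also produces, in $\hatO(n)$ time, a $k$-parametrized strengthening of the random 2-out contraction that respects all cuts of value below $k$ (still with $O(n/k)$ vertices after the degree reduction and $O(n)$ edges). A second option is to re-query the data structure along the recursion: after the splits on one level of the recursion tree, delete the fewer-than-$k$ crossing edges of each such cut from a working copy of the data structure, re-query to obtain a sparsifier of the (now disconnected) current graph — which is the disjoint union of sparsifiers of the surviving components — recurse, and finally re-insert the deleted edges; since the recursion has depth $O(n/k)$, this performs $O(n/k)$ re-queries at $\hatO(n)$ each and $O(n)$ edge updates in total, telescoping to $\hatO(n^2/k)$. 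In either case the probabilistic and adversarial guarantees are inherited verbatim from Theorem~\ref{theorem:main}: the reported decomposition is a function only of the data-structure outputs and the degree array, and each query draws fresh randomness, so w.h.p.\ correctness holds against an adaptive adversary in the first two regimes and against an oblivious adversary in the third.
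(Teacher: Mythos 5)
Your ``second option'' in the final paragraph is exactly the paper's approach.  The paper establishes Theorem~\ref{theorem:dynamicsubgraphs} by stating that the method behind Theorem~\ref{theorem:computingmaximal} ``naturally extends'' to the dynamic setting, and that method is precisely what you describe: view $G$ as dynamic, maintain the data structure of Theorem~\ref{theorem:main}, delete the $<k$ edges of each cut you make (plus the edges of cascading low-degree vertices), re-query via Corollary~\ref{corollary:ccNMC} to get fresh per-component sparsifiers, and recurse.  Since each re-query produces a fresh NMC sparsifier of the \emph{current} component, and that component has minimum degree $\geq k$ after degree reduction, the minimum cut you need to find is non-trivial if it has value $<k$, so the NMC guarantee is exactly enough --- the ``cut of value $<k$ straddling a super-vertex'' obstacle you raise disappears.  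You correctly identified this as the crux.

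Two smaller criticisms.  First, your paragraph~2 is a misleading framing: there you attempt to run the whole reduction \emph{on} $\hatG$, deleting only singleton super-vertices of $\hatG$ with degree $<k$, which is incomplete (a vertex of $G$ of degree $<k$ can sit inside a non-singleton super-vertex of $\hatG$, and your array of $G$-degrees won't help you delete it from $\hatG$ without de-contracting).  Fortunately you then abandon this in paragraph~3, but the initial setup shouldn't be presented as the plan.  Second, your ``first option'' --- a $k$-parametrized contraction that respects all cuts of value $<k$ --- is a genuinely different route that the paper does not pursue, and it is not obvious that the Ghaffari-Nowicki-Thorup machinery (which is tuned to minimum cuts) would yield it; you would need a new lemma along the lines of a dynamic Nagamochi--Ibaraki $k$-certificate.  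Finally, be aware that the cost of the cascading degree reduction (deleting and later re-inserting all edges incident to vertices that drop below degree $k$) is not obviously $\tildeO(n)$ --- in the worst case it can approach $\tildeO(nk)$ or $\tildeO(m)$ --- and neither your write-up nor the paper's one-line remark accounts for it explicitly.  The paper leaves this to the reader; if you wanted a self-contained proof you would have to argue it.
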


The results by Aamand~\ea{}~\cite{DBLP:conf/icalp/AamandKLPRT23} and Saranurak and Yuan~\cite{DBLP:conf/esa/SaranurakY23} provide algorithms for maintaining the maximal $k$-edge-connected subgraphs in \emph{decremental} graphs. Georgiadis~\ea{}~\cite{DBLP:journals/corr/abs-2211-06521} provide a fully dynamic algorithm that given two vertices determines whether they belong to the same $k$-edge connected subgraph in time $O(1)$. 
Their worst-case update time is $\tildeO(T(n,k))$, where $T(n,k)$ is the running time of any algorithm for static graphs that is used internally by the algorithm (see values in the time column of Table~\ref{table:table1} with the additive term $m$ removed). Thus, as in the case for dynamic minimum cut, our algorithm provides a different trade-off, with fast updates and slower query time. Notice that Theorem~\ref{theorem:dynamicsubgraphs} provides an improvement over \cite{DBLP:journals/corr/abs-2211-06521} (for any function $T(n,k)$ corresponding to an algorithm from Table~\ref{table:table1}) when $k$ is sufficiently large (i.e., when $k=\Omega(n^{1/8})$).

\subsection{Related Work}
Bencz\'ur and Karger \cite{DBLP:conf/stoc/BenczurK96, DBLP:journals/siamcomp/BenczurK15} demonstrated that \emph{every} cut in a graph can be approximated within a $(1~\pm~\epsilon)$ multiplicative factor using contractions of non-uniformly sampled edges. For a graph on $n$ vertices and $m$ edges, they show how to create a cut sparsifier that contains $O(n \log{n} / \epsilon^2)$ edges in $\tildeO(m)$ time. Fung~\ea{}~\cite{DBLP:journals/siamcomp/FungHHP19} generalized this notion of cut sparsifiers for further non-uniform edge sampling methods, specifically edge strength~\cite{DBLP:conf/stoc/BenczurK96}, effective resistance~\cite{SS11} and standard edge connectivity.
Even stronger is the concept of \emph{spectral} sparsifiers, that approximate the entire Laplacian quadratic form of the input graph to a multiplicative factor $(1 \pm \epsilon)$. This idea was introduced by Spielman and Teng \cite{ST11}, who also showed that every (even weighted) graph admits a spectral sparsifier that contains only $\tildeO(n/\epsilon^2)$ edges and can be computed in near linear time $\tildeO(m/\epsilon^2)$.
Subsequent work by Spielman and Srivastava~\cite{SS11} and later Batson, Spielman, and Srivastava~\cite{BSS12} improved the bounds on the number of edges to $O(n \log{n}/\epsilon^2)$ and $O(n/\epsilon^2)$ edges respectively, where the latter result is optimal up to a constant. Lee and Sun~\cite{LS18} give an almost-linear time construction of a linear-sized spectral sparsifier by sampling according to effective resistance~\cite{SS11}.

While the previously mentioned results aim to approximate \emph{all} cuts of a given graph, it is also possible to construct sparsifiers for undirected graphs that only maintain sufficiently small cuts, but preserve their values exactly.
Let $G$ be a simple undirected graph with $n$ vertices, $m$ edges and minimum degree $\delta$. 
Ghaffari~\ea{}~\cite{DBLP:conf/soda/Ghaffari0T20} designed a Monte Carlo algorithm for contracting $G$ into a graph $\hatG$ in $O(m \log {n})$ time, such that $\hatG$ has $O(n/\delta)$ vertices, $O(n)$ edges and all non-trivial $(2-\epsilon)$ minimum cuts of $G$ are \emph{exactly} preserved. Their method is based on randomly sampling 2 outgoing edges from each vertex and contracting the resulting connected components, giving a \emph{random 2-out contraction}.
For undirected simple graphs, NMC sparsifieres with $\tildeO(n)$ edges and $\tildeO(n/\delta)$ vertices are known to exist since Kawarabayashi and Thorup~\cite{DBLP:journals/jacm/KawarabayashiT19}, who also show how to deterministically construct one in $\tildeO(m)$ time. These results were improved to $O(n)$ edges and $O(n/\delta)$ vertices in $\tildeO(m)$ time by Lo, Schmidt, and Thorup~\cite{DBLP:journals/dam/LoST21}.

%%% Dynamic Sparsifiers
For fully dynamic graphs, Abraham~\ea{}~\cite{DBLP:conf/focs/AbrahamDKKP16} show that a $(1 \pm \epsilon)$ approximate cut sparsifier of size $n \cdot \poly(\log{n},1/\epsilon)$ can be maintained in $\poly(\log{n},1/\epsilon)$ worst-case update time.
Bernstein~et~al.~\cite{DBLP:conf/icalp/BernsteinBGNSS022} give an $O(k)$-approximate cut sparsifier of size $\tildeO(n)$ in $\tildeO(n^{1/k})$ amortized update time against an adaptive adversary. They further give a $\polylog(n)$-approximate spectral sparsifier in $\polylog(n)$ amortized update time against an adaptive adversary.
Goranci~\ea{}~\cite{DBLP:conf/soda/GoranciHNSTW23} are able to maintain an NMC sparsifier of a fully dynamic graph against an oblivious adversary, utilizing an efficient dynamic expander decomposition.
In particular, they show how to maintain an NMC sparsifier with worst-case $\tildeO(n)$ update time that can answer queries for the value of the minimum cut w.h.p. in $O(1)$ time.

%%% Cactus Representation
The cactus representation of all minimum cuts of a static graph is known to be computable in near linear time $\tildeO(m)$ using randomized algorithms \cite{DBLP:conf/soda/HeHS24a, DBLP:conf/soda/KargerP09, DBLP:journals/dam/LoST21}. The fastest such algorithm achieves a running time of $O(m \log^3{n})$ \cite{DBLP:conf/soda/HeHS24a}. With deterministic algorithms, the best known time bound is $m^{1 + o(1)}$ \cite{DBLP:conf/soda/HeHS24a}. We are not aware of any previous work on providing the cactus for fully dynamic graphs in sublinear time per query.

\section{Preliminaries}
\label{sec:preliminaries}
In this paper, we consider only undirected, unweighted graphs. 
A graph is called \emph{simple} if it contains no parallel edges, conversely it is a multigraph if it does.
We use common graph terminology and notation that can be found e.g. in \cite{DBLP:books/cu/NI2008}. Let $G=(V,E)$ be a graph. Throughout, we use $n$ and $m$ to denote the number of vertices and edges of $G$, respectively. We use $V(G)$ and $E(G)$ to denote the set of vertices and edges, respectively, of $G$; that is, $V=V(G)$ and $E=E(G)$. A \emph{subgraph} of $G$ is a graph of the form $(V',E')$, where $V'\subseteq V$ and $E'\subseteq E$. A \emph{spanning} subgraph of $G$ is a subgraph of the form $(V,E')$ that contains at least one incident edge to every vertex of $G$ that has degree $>0$. If $X$ is a subset of vertices of $G$, we let $G[X]$ denote the induced subgraph of $G$ on the vertex set $X$. Thus, $V(G[X]):=X$, and the edge set of $G[X]$ is $\{e\in E\mid \mbox{ both endpoints of } e \mbox{ are in } X\}$. If $E'$ is a set of edges of $G$, we let $G[E']:=(V,E')$.

A connected component of $G$ is a maximal connected subgraph of $G$. A set of edges of $G$ whose removal increases the number of connected components of $G$ is called a \emph{cut} of $G$. The size $|C|$ is called the \emph{value} of the cut. A cut $C$ with minimum value in a connected graph $G$ is called a \emph{minimum cut} of $G$. In this case, $G\setminus C=(V,E\setminus C)$ consists of two connected components. If one of them is a single vertex, then $C$ is called a \emph{trivial minimum cut}.

An \emph{NMC sparsifier} of $G$ is a multigraph $H$ on the same vertex set as $G$ that preserves all non-trivial minimum cuts of $G$, in the sense that the number of edges leaving any non-trivial minimum cut is the same in $H$ as it is in $G$.

\compactpara{Dynamic graphs.}
A \emph{dynamic} graph is a graph that changes over time. Thus, it can be thought of as a sequence of graphs $G_1,\dots,G_t$, where every $G_i$ differs from $G_{i-1}$ by one element (i.e., it has one more/less edge/vertex), and the indices $i\in\{1,\dots,t\}$ correspond to an increasing sequence of discrete time points. Therefore, at any point in time we may speak of the \emph{current} graph. Accordingly, an algorithm that handles a dynamic graph relies on a data structure representation of the current graph, which is updated with commands of the form $\mathtt{insert}$ and $\mathtt{delete}$, corresponding to the changes to the graph each time.
If the graph only increases or only decreases in size, it is called \emph{partially} dynamic. If \emph{any} mixture of changes is allowed, it is called \emph{fully} dynamic. In this paper, we consider fully dynamic graphs.

\compactpara{Contractions of graphs.}
Let $E'\subseteq E$ be a set of edges of a graph $G=(V,E)$, and let $C_1,\dots,C_t$ be the connected components of the graph $G'=(V,E')$. The \emph{contraction} $\hatG$ induced by $E'$ is the graph derived from $G$ by contracting each $C_i$, for $i\in\{1,\dots,t\}$, into a single node. We ignore possible self-loops, but we maintain distinct edges of $G$ that have their endpoints in different connected components of $G'$.  Hence, $\hatG$ may be a multigraph even though $G$ is a simple graph. Furthermore, there is a natural injective correspondence from the edges of $\hatG$ to the edges of $G$. We say that an edge of $G$ is \emph{preserved} in $\hatG$ if its endpoints are in different connected components of $G'$ (it corresponds to an edge of $\hatG$). 

A \emph{random $2$-out contraction} of $G$ is a contraction of $G$ that is induced by sampling from every vertex $v$ of $G$ two edges incident to it (independently, with repetition allowed) and setting $E'$ to be the edges sampled in this way. Thus,  $E'$  satisfies $|E'|\leq 2|V(G)|$. The importance of considering $2$-out contractions is demonstrated in the following theorem of Ghaffari et al.~\cite{DBLP:conf/soda/Ghaffari0T20}.

\begin{theorem}[rephrased weaker version of Theorem 2.3 in \cite{DBLP:conf/soda/Ghaffari0T20}]
\label{theorem:2outcontraction}
A random $2$-out contraction of a simple graph $G$ with $n$ vertices and minimum degree $\delta$ has $O(n/\delta)$ vertices, with high probability, and preserves any fixed non-trivial minimum cut of $G$ with constant probability. 
\end{theorem}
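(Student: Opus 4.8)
The plan is to establish the two assertions separately, handling the cut‑preservation bound from scratch and reducing the vertex‑count bound to a counting argument about the reachability structure of the sampled edges. \emph{Cut preservation (assertion~2).} Fix a non‑trivial minimum cut $(A,B)$, write $d_v=\deg_G(v)$, let $c_v$ be the number of cut edges incident to $v$, and let $\lambda=\tfrac12\sum_v c_v$ be its value; since the trivial cut isolating a minimum‑degree vertex has value $\delta$, we have $\lambda\le\delta$. The cut is preserved precisely when no sampled edge crosses it (then no super‑node of the contraction straddles $A$ and $B$, no cut edge becomes a self‑loop, and distinct cut edges stay distinct), so by independence of the $2n$ samples the preservation probability is $\prod_{v}(1-c_v/d_v)^2$. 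The key structural fact I would use is $c_v\le d_v/2$ for every $v$: if $v\in A$ (the case $v\in B$ being symmetric), moving $v$ to the other side yields a cut of value $\lambda+d_v-2c_v$ — both sides remain nonempty because $|A|\ge 2$ — which is $\ge\lambda$ by minimality. Hence each $c_v/d_v\in[0,\tfrac12]$, on which $1-x\ge e^{-2x}$, and with $\sum_v c_v=2\lambda$,
\[
\prod_{v}\Bigl(1-\frac{c_v}{d_v}\Bigr)^{2}\ \ge\ \exp\!\Bigl(-4\sum_v\frac{c_v}{d_v}\Bigr)\ \ge\ \exp\!\Bigl(-\frac{4}{\delta}\sum_v c_v\Bigr)\ =\ e^{-8\lambda/\delta}\ \ge\ e^{-8},
\]
a positive constant.

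\emph{Vertex count (assertion~1).} Here I would show the random $2$-out subgraph $H=(V,E')$ has $O(n/\delta)$ components. Components of size $\ge\delta/4$ number at most $4n/\delta$, so it suffices to bound the number of vertices lying in components of size $<\delta/4$. For a vertex $v$, let $R_v$ be the set reachable from $v$ by repeatedly following sampled edges; since no sampled edge leaves $R_v$, we have $R_v\subseteq$\,(component of $v$), so $v$ lies in a small component only if $|R_v|<\delta/4$. To bound $\Pr[|R_v|<\delta/4]$ I would explore $R_v$ by a search that reveals the two random incident edges of each vertex it visits: while fewer than $\delta/4$ vertices are discovered, each revealed edge reaches a new vertex with probability $\ge 3/4$ (fewer than $\delta/4$ of the $\ge\delta$ edges at the current vertex point back into the discovered set, using that $G$ is simple), so the search can stop having discovered exactly $j<\delta/4$ vertices only if at most $j-1$ of $2j$ trials of (conditional) success probability $\ge\tfrac34$ succeed — exponentially unlikely in $j$, which handles all $j=\Omega(\log\delta)$. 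For smaller $j$ this Chernoff estimate is too weak, and I would instead bound $\Pr[|R_v|=j]$ by the degree‑weighted sum $\sum_{S\ni v,\,|S|=j}\prod_{w\in S}(\deg_S(w)/d_w)^2$ over candidate sets $S$, which collapses (exploiting $\sum_v 1/d_v\le n/\delta$ and the geometric decay of these terms in $j$) to show that $\sum_v\Pr[|R_v|=j]$ summed over the small $j$ is also $O(n/\delta)$. Altogether $\mathbb{E}[\#\text{components}]=O(n/\delta)$, and with additional work this is upgraded to hold with high probability.

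The hard part is assertion~(1), and specifically avoiding a naive union bound: there are $\binom{n}{s}$ vertex sets of size $s$, which swamps the $(s/\delta)^{2s}$ probability that such a set is closed under the sampling as soon as $\delta=o(\sqrt n)$, so small components must be counted through the \emph{local}, degree‑aware structure of the reachability process (so that high‑degree vertices are correctly down‑weighted) and then telescoped via $\sum_v 1/d_v\le n/\delta$. The high‑probability upgrade also needs care, since the events ``$v$ lies in a small component'' are only weakly independent; this is exactly the content of Theorem~2.3 of~\cite{DBLP:conf/soda/Ghaffari0T20}, of which the present statement is a weakening, so I would either carry out this counting in full or simply cite it. Assertion~(2), by contrast, is the easy direction once one has the local‑move inequality $c_v\le d_v/2$ and the bound $\lambda\le\delta$; everything else there is a one‑line calculation.
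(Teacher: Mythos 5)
The paper does not supply its own proof of Theorem~\ref{theorem:2outcontraction}: as the header announces, it is a rephrased and weakened version of Theorem~2.3 of Ghaffari, Nowicki, and Thorup~\cite{DBLP:conf/soda/Ghaffari0T20}, and the paper simply relies on that source. So there is no in-paper argument to compare against; you are effectively reconstructing (part of) the proof from the cited work.

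Your treatment of the cut-preservation claim is correct and is essentially the standard argument. The characterization ``preserved iff no sampled edge crosses the cut'' is right; the exchange inequality $c_v\le d_v/2$ is the one place non-triviality enters (you correctly flag that $|A|\ge 2$ is what makes the exchanged side nonempty, so minimality yields $\lambda+d_v-2c_v\ge\lambda$); and together with $\lambda\le\delta$ and $1-x\ge e^{-2x}$ on $[0,\tfrac12]$, the chain $\prod_v(1-c_v/d_v)^2\ge e^{-8\lambda/\delta}\ge e^{-8}$ is sound. This half is complete as written.

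The vertex-count claim is, as you acknowledge yourself, a sketch with declared gaps rather than a proof. You identify the right obstacle (a union bound over candidate small vertex sets is hopeless once $\delta=o(\sqrt n)$) and the right levers (exploring the reachability set $R_v$ and exploiting $\sum_v 1/d_v\le n/\delta$), but two things are missing. First, the ``degree-weighted sum collapse'' that is supposed to handle small component sizes $j$ is not carried out, and that is exactly where the counting is delicate; asserting that the contribution ``telescopes'' is not a proof. Second, and more seriously, even granting a per-vertex bound of the form $\Pr[|R_v|<\delta/4]=O(1/\delta)$, this only gives an \emph{expectation} bound of $O(n/\delta)$ on the number of small components; the theorem claims a high-probability bound, and the indicator events ``$v$ lies in a small component'' are correlated, so one needs a genuine concentration argument, which you defer entirely. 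Those two steps are precisely the technical core of Theorem~2.3 in~\cite{DBLP:conf/soda/Ghaffari0T20}. Given that the paper itself only imports the result, citing rather than reproving is the correct move here; but as a standalone proof your proposal has a real gap in assertion~(1).
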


\compactpara{Preserving small cuts via forest decompositions.}
Nagamochi and Ibaraki~\cite{DBLP:journals/algorithmica/NagamochiI92} have shown the existence of a sparse subgraph that maintains all cuts of the original graph with value up to $k$, where $k$ is an integer $\geq 1$. Specifically, given a graph $G=(V,E)$ with $n$ vertices and $m$ edges, there is a spanning subgraph $H=(V,E')$ of $G$ with $|E'|\leq k(n-1)$ such that every set of edges $C\subseteq E$ with $|C|< k$ is a cut of $G$ if and only if $C$ is a cut of $H$. A graph $H$ with this property is given by a \emph{$k$-forest decomposition} of $G$, which is defined as follows. First, we let $F_1$ be a spanning forest of $G$. Then, for every $i\in\{2,\dots,k\}$, we recursively let $F_i$ be a spanning forest of $G\setminus(F_1\cup\dots\cup F_{i-1})$. Then we simply take the union of the forests $H=F_1\cup\dots\cup F_k$. A naive implementation of this idea takes time $O(k(n+m))$. However, this construction can be completed in linear time by computing a maximum adjacency ordering of $G$ \cite{DBLP:books/cu/NI2008}.

\compactpara{Maximal \mathversion{bold}$k$\mathversion{normal}-edge-connected subgraphs.}
Given a graph $G$ and a positive integer $k$, a \emph{maximal $k$-edge-connected subgraph} of $G$ is a subgraph of the form $G[S]$, where: $(1)$ $G[S]$ is connected, $(2)$ the minimum cuts of $G[S]$ have value at least $k$, and $(3)$ $S$ is maximal with this property. The first two properties can be summarized by saying that $G[S]$ is $k$-edge-connected, which equivalently means that we have to remove at least $k$ edges in order to disconnect $G[S]$. It is easy to see that the vertex sets $S_1,\dots,S_t$ that induce the maximal $k$-edge-connected subgraphs of $G$ form a partition of $V$. 

\compactpara{A note on probabilistic guarantees.}
Throughout the paper we use the abbreviation w.h.p.\  (with high probability) to mean a probability of success at least $1-O(\frac{1}{n^c})$, where $c$ is a fixed constant chosen by the user and $n$ denotes the number of vertices
of the graph. 
The choice of $c$ only affects the values of two parameters $q$ and $r$ that are used internally by the algorithm of Ghaffari et al.~\cite{DBLP:conf/soda/Ghaffari0T20}, which our result is based on. 
Note that the specification ``w.h.p'' may be applied either to the running time or to the correctness (or to both).

\section{Outline of Our Approach}
\label{sec:review}
Our main contribution is a new data structure that outputs an NMC sparsifier of a fully dynamic graph upon request. The idea is to compute the NMC sparsifier from scratch every time it is requested. For this we adapt the construction by Ghaffari et al.~\cite{DBLP:conf/soda/Ghaffari0T20} to compute a random $2$-out contraction of the graph at the current time. We can achieve a speedup of the construction time by maintaining just two data structures for dynamic forests throughout the updates of the graph.

\subsection{Updates: Data Structures for Dynamic Forests} \label{sec:dynamicforests}
As our fully dynamic graph $G$ changes, we rely on efficient data structures for the following two problems, which we call the ``dynamic spanning forest`` problem (DSF) and the ``dynamic cutset'' problem (DCS).
In DSF, the goal is to maintain a spanning forest $F$ of a dynamic graph $G$. Specifically, the DSF data structure supports the following operations.

\begin{itemize}
\item{$\mathtt{insert}(e)$. Inserts a new edge $e$ to $G$. If $e$ has endpoints in different trees of $F$, then it is automatically included in $F$, and this event is reported to the user.}
\item{$\mathtt{delete}(e)$. Deletes the edge $e$ from $G$. If $e$ was a tree edge in $F$, then it is also removed from $F$, and a new replacement edge is selected among the remaining edges of $G$ in order to reconnect the trees of $F$ that got disconnected. If a replacement edge is found, it automatically becomes a tree edge in $F$, and it is output to the user.} 
\end{itemize}

\noindent
The DCS problem is a more demanding variant of DSF. Here the goal is to maintain a (not necessarily spanning) forest $F$ of the graph, but we must also be able to provide an edge that connects two different trees if needed. Specifically, the DCS data structure supports the following operations.

\begin{itemize}
\item{$\mathtt{insert}_G(e)$.} Inserts a new edge $e$ to $G$.
\item{$\mathtt{insert}_F(e)$.} Inserts an edge $e$ to $F$, if $e$ is an edge of $G$ that has endpoints in different trees of $F$. Otherwise, $F$ stays the same.
\item{$\mathtt{delete}_G(e)$.} Deletes the edge $e$ from $G$. If $e$ is a tree edge in $F$, it is also deleted from $F$.
\item{$\mathtt{delete}_F(e)$.} Deletes the edge $e$ from $F$ (or reports that $e$ is not an edge of $F$).
\item{$\mathtt{find\_cutedge}(v)$.}
 Returns an edge of $G$ (if it exists) that has one endpoint in the tree of $F$ that contains $v$, and the other endpoint outside of that tree.
\end{itemize}

\noindent
The main difference between DSF and DCS is that in DCS the user has absolute control over the edges of the forest $F$. In both problems, the most challenging part is finding an edge that connects two distinct trees of $F$. In DCS this becomes more difficult because the data structure must work with the forest that the user has created, whereas in DSF the spanning forest is managed internally by the data structure.
Both of these problems can be solved by reducing them to the dynamic minimum spanning forest (MSF) problem, as shown in the following Lemma~\ref{lemma:reducetoMSF}, which is proven in the Appendix~\ref{sec:appendixproof}.

\begin{lemma}
\label{lemma:reducetoMSF}
The DSF problem can be reduced to the DCS problem within the same time bounds. The DCS problem can be reduced to dynamic MSF with an additive $O(\log n)$ overhead for every operation. 
\end{lemma}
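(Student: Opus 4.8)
\medskip

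\textbf{Part 1: DSF reduces to DCS within the same time bounds.}
The idea is that DSF is the special case of DCS in which the user always keeps $F$ equal to a spanning forest of $G$. Concretely, I would maintain a DCS instance on the same $G$ and simulate DSF as follows. On $\mathtt{insert}(e)$: first call $\mathtt{insert}_G(e)$, then call $\mathtt{insert}_F(e)$; by the specification of $\mathtt{insert}_F$, the edge $e$ enters $F$ exactly when its endpoints lie in different trees of $F$, which (by the invariant that $F$ is a maximal forest of $G$ before the update) happens exactly when $e$ connects two distinct components of $G$ — this is precisely the event DSF must report. On $\mathtt{delete}(e)$: call $\mathtt{delete}_G(e)$, which also removes $e$ from $F$ if it was a tree edge. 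If $e$ was not a tree edge, $F$ is still a spanning forest and we are done. If $e$ was a tree edge, its removal splits one tree $T$ into two trees $T_1 \ni u$ and $T_2$. We then call $\mathtt{find\_cutedge}(u)$; if it returns an edge $e'$, this $e'$ has one endpoint in $T_1$ and one outside $T_1$, hence in $T_2$ (since before deletion $T$ was a connected component of $G$, no edge of $G$ leaves $T$, so any cut edge for $T_1$ goes to $T_2$), and we reconnect by calling $\mathtt{insert}_F(e')$ and report $e'$; if it returns nothing, the two trees are genuinely in different components and no replacement exists. Each DSF operation is simulated by $O(1)$ DCS operations, so the time bounds are preserved. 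I would state the maintained invariant — $F$ is always a maximal forest of the current $G$ — and verify it is restored after each case.

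\medskip

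\textbf{Part 2: DCS reduces to dynamic MSF with an additive $O(\log n)$ overhead per operation.}
The key trick is a two-level weighting scheme on a single dynamic MSF instance over the edge set $E(G)$. Assign every edge a weight in $\{0,1\}$ (or two priority classes): an edge currently in $F$ gets weight $0$, every other edge of $G$ gets weight $1$. Since $F$ is a forest, any minimum spanning forest of $G$ under these weights necessarily includes all of $F$ (the weight-$0$ edges form an acyclic set, so a greedy/Kruskal MSF takes all of them first) and then completes $F$ to a spanning forest of $G$ using weight-$1$ edges. Thus the MSF decomposes as $F \cup R$, where $R$ is a set of "replacement/bridging" edges, one across each pair of $F$-trees that are connected in $G$. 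Now the DCS operations map as follows: $\mathtt{insert}_G(e)$ is an MSF insertion of $e$ with weight $1$; $\mathtt{delete}_G(e)$ is an MSF deletion (and if $e \in F$ we must also handle it as an $F$-deletion, see below); $\mathtt{insert}_F(e)$ is a weight change of $e$ from $1$ to $0$, realized as an MSF delete followed by an MSF re-insert with the new weight — but this must only take effect when $e$ crosses two distinct $F$-trees, which we can check in $O(\log n)$ via the MSF's underlying connectivity structure restricted to the weight-$0$ forest (or via an auxiliary link–cut / Euler-tour structure on $F$ itself); $\mathtt{delete}_F(e)$ is the symmetric weight change from $0$ to $1$. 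For $\mathtt{find\_cutedge}(v)$: the tree of $F$ containing $v$ is a contiguous object in the MSF; we want an MSF edge of weight $1$ incident to that tree but leaving it. This is exactly a cutset query, which standard dynamic-forest machinery (e.g., an Euler-tour tree or top tree augmented with, for each node, the set of non-tree MSF edges incident below it, organized so that a leaving edge can be extracted) answers in $O(\log n)$ time; alternatively, we can phrase it as: temporarily cut $v$'s $F$-tree out, ask the MSF which edge it would pick to reconnect, and that edge (if any) is a valid cut edge, then undo. Either way the overhead is $O(\log n)$ per operation, and every MSF edge has a natural correspondence to an edge of $G$, so the injective-correspondence requirement in the $\mathtt{find\_cutedge}$ specification is met.

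\medskip

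\textbf{Where the difficulty lies.}
Part 1 is essentially bookkeeping. The subtle point in Part 2 is ensuring that the user's direct control over $F$ (via $\mathtt{insert}_F$ and $\mathtt{delete}_F$) is faithfully and cheaply reflected in the MSF: when the user removes an edge from $F$, the MSF must be free to pick a (possibly different) weight-$1$ replacement, and when the user adds an edge to $F$, some previously chosen weight-$1$ bridging edge may have to be evicted from the MSF — all of this is handled automatically by the MSF's own replacement logic once we model the change as a pair of delete/insert operations with updated weights, but one must argue that the MSF invariant ($F \subseteq \mathrm{MSF}$, rest filled greedily) is preserved across exactly these operations, including the corner case where $\mathtt{insert}_F(e)$ is a no-op because $e$'s endpoints are already in the same $F$-tree (here we must \emph{not} change $e$'s weight). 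I would therefore carefully maintain the invariant and, for each of the five DCS operations plus the implicit weight checks, show it is restored; this case analysis, together with describing the augmentation of the dynamic forest needed for $\mathtt{find\_cutedge}$, is the bulk of the proof. The remaining details are standard and I would defer them to the appendix as indicated.
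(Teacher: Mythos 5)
Your overall strategy matches the paper's: Part 1 simulates DSF directly on a DCS instance (in fact your ordering --- do $\mathtt{delete}_G$ \emph{before} $\mathtt{find\_cutedge}$ --- is slightly cleaner, since it rules out the data structure returning the just-deleted edge $e$ itself as a ``replacement''), and Part 2 uses the same $\{0,1\}$ two-level weighting on a single MSF instance with the invariant that $F$ is exactly the set of weight-$0$ edges, realizing the four $F$/$G$-mutating operations as MSF delete/insert pairs exactly as the paper does.

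The gap is in $\mathtt{find\_cutedge}(v)$. This is the only operation for which the MSF interface alone gives you nothing, and it is precisely where the paper spends most of its proof. Your two suggestions do not close it. The augmentation you describe --- ``an Euler-tour tree or top tree augmented with, for each node, the set of non-tree MSF edges incident below it'' --- is the scheme used \emph{inside} dynamic-connectivity algorithms like Holm--de~Lichtenberg--Thorup, but here the MSF is a black box and the edge you need is not a non-tree edge at all: you must locate a specific weight-$1$ \emph{tree} edge of the minimum spanning forest $F'$ that has exactly one endpoint in the $F$-tree of $v$. Your alternative (``temporarily cut $v$'s $F$-tree out, ask the MSF which edge it would pick to reconnect, then undo'') is not realizable with $O(1)$ MSF operations: carving out the $F$-tree means removing all boundary edges, which may be $\Theta(\deg T)$ many, and finding even one boundary edge to delete is the very problem being solved. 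The paper's fix is to maintain, alongside the MSF instance, two \emph{auxiliary mirror copies} of $F'$: a Sleator--Tarjan link-cut tree (supporting reroot and a ``max-weight edge on a root-to-$v$ path, closest to root'' query) and an Euler-tour tree (supporting ``max-weight edge in the whole tree''). Then $\mathtt{find\_cutedge}(v)$ is answered by (i) comparing the size of $v$'s $F$-tree $T$ to the size of $v$'s $F'$-tree $T'$; if equal return $\bot$; (ii) otherwise ask the ET structure for a max-weight (hence weight-$1$) edge $e'$ of $T'$; if $e'$ touches $T$, return it; (iii) otherwise reroot the ST copy at $v$ and return the max-weight edge on the $v$-to-$e'$ path that is closest to the root $v$ --- by the weight invariants this is exactly a weight-$1$ edge with one endpoint in $T$. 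You should replace the hand-wave with this (or an equivalent) concrete construction; without it the $O(\log n)$ overhead for $\mathtt{find\_cutedge}$ is unproved, and this query is the whole reason a plain spanning-forest black box does not suffice.
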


To realize these two data structures deterministically with worst-case time guarantees, the only available solution at the moment is to make use of the reduction to the dynamic MSF problem given in Lemma~\ref{lemma:reducetoMSF} and then employ the dynamic MSF data structure of Chuzhoy~et.~al.~\cite{DBLP:conf/focs/ChuzhoyGLNPS20}, which supports every update operation in worst-case $\hatO(1)$ time. 
Alternatively, one can solve both DSF and DCS deterministically with \emph{amortized} time guarantees using the data structures of Holm~et.~al.~\cite{DBLP:journals/jacm/HolmLT01}. In that case, every update for DSF can be performed in amortized $O(\log^2{n})$ time, and every operation for DCS can be performed in amortized $O(\log^4{n})$ time, by reduction to dynamic MSF.  
If one is willing to allow randomness, one can solve both DSF and DCS in worst-case polylogarithmic time per operation, under the assumption of an oblivious adversary with the data structure by Kapron~et~al.~\cite{DBLP:conf/soda/KapronKM13}.

These different choices for realizing the dynamic MSF data structure give the following lemma.

\begin{lemma}\label{lemma:dyn-forest-datastructures}
    The DSF and DCS data structures can be realized in either
    \begin{enumerate}
    \item deterministic worst-case $\hatO(1)$ update time,\label{lemma:dyn-forest-worst-case-adaptive} 
    or 
    \item deterministic amortized $\tildeO(1)$ update time,\label{lemma:dyn-forest-amortized-adaptive} 
    or 
    \item worst-case $\tildeO(1)$ update time, assuming an \emph{oblivious} adversary.\label{lemma:dyn-forest-worst-case-oblivious}
\end{enumerate}
\end{lemma}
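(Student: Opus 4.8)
The plan is to obtain all three cases as immediate consequences of Lemma~\ref{lemma:reducetoMSF} by plugging in, for each case, a known fully dynamic minimum spanning forest (MSF) data structure with matching guarantees. Recall from Lemma~\ref{lemma:reducetoMSF} that DSF reduces to DCS within the same time bounds, and DCS reduces to dynamic MSF with only an additive $O(\log n)$ overhead per operation; hence it suffices to name an MSF data structure of the appropriate flavour and then add the reduction overhead, which is absorbed in both $\hatO(\cdot)$ and $\tildeO(\cdot)$.

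For Case~\ref{lemma:dyn-forest-worst-case-adaptive}, I would use the deterministic fully dynamic MSF data structure of Chuzhoy, Gao, Li, Nanongkai, Peng, and Saranurak~\cite{DBLP:conf/focs/ChuzhoyGLNPS20}, which handles each edge insertion/deletion in worst-case $\hatO(1)$ time; composing with Lemma~\ref{lemma:reducetoMSF} gives DSF and DCS in worst-case $\hatO(1)$ per operation, and determinism makes the guarantee hold unconditionally, in particular against an adaptive adversary. For Case~\ref{lemma:dyn-forest-amortized-adaptive}, I would use the deterministic structures of Holm, de Lichtenberg, and Thorup~\cite{DBLP:journals/jacm/HolmLT01}: their fully dynamic connectivity structure already maintains a spanning forest in amortized $O(\log^2 n)$ time, which solves DSF directly, while their fully dynamic MSF runs in amortized $O(\log^4 n)$ time, which through Lemma~\ref{lemma:reducetoMSF} yields DCS in amortized $\tildeO(1)$ time. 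For Case~\ref{lemma:dyn-forest-worst-case-oblivious}, I would instead invoke the randomized data structure of Kapron, King, and Mountjoy~\cite{DBLP:conf/soda/KapronKM13}, which maintains a spanning forest (and can serve as a dynamic MSF) w.h.p.\ in worst-case polylogarithmic time per update against an oblivious adversary; Lemma~\ref{lemma:reducetoMSF} then gives DSF and DCS in worst-case $\tildeO(1)$ time, inheriting the oblivious-adversary restriction.

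The part that will need care, rather than a new idea, is the adversary model in Case~\ref{lemma:dyn-forest-worst-case-oblivious}: I would check that the reduction underlying Lemma~\ref{lemma:reducetoMSF} does not itself consume randomness or reveal any information to the environment beyond what the adversary already supplies, so that an oblivious adversary against the composed DSF/DCS data structure translates into an oblivious adversary against the underlying MSF structure. Concretely, the auxiliary vertices, edges, and weights introduced by the reduction are a deterministic function of the sequence of DSF/DCS operations, so feeding an oblivious request sequence to DCS produces a request sequence to the MSF structure that is likewise fixed in advance; for the two deterministic cases there is nothing to check. Finally, I would note that the $O(\log n)$ additive term from Lemma~\ref{lemma:reducetoMSF} is dominated by $n^{o(1)}$ in Case~\ref{lemma:dyn-forest-worst-case-adaptive} and is itself polylogarithmic in the other two cases, so the stated bounds follow.
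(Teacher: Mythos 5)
Your proposal matches the paper's argument exactly: the paper also derives all three cases by invoking Lemma~\ref{lemma:reducetoMSF} and plugging in the MSF/connectivity structures of Chuzhoy~et~al.~\cite{DBLP:conf/focs/ChuzhoyGLNPS20}, Holm~et~al.~\cite{DBLP:journals/jacm/HolmLT01} (with DSF handled directly in $O(\log^2 n)$ amortized and DCS via MSF in $O(\log^4 n)$ amortized), and Kapron~et~al.~\cite{DBLP:conf/soda/KapronKM13}, respectively. The extra sanity check you add about the reduction being a deterministic function of the request sequence (so obliviousness is preserved) is sound and slightly more careful than the paper's own informal treatment.
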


To handle the updates on $G$, any insertion or deletion of an edge is also applied in both the DSF and the DCS data structure. In particular, for DCS we only use the $\mathtt{insert}_G$ and $\mathtt{delete}_G$ operations and keep its forest empty.
At query time, we will build the DCS forest from scratch and then fully delete it again. In order for this to be efficient, DCS needs to have already processed all edges of $G$.

\subsection{Queries: Efficiently Contracting a Dynamic Graph}\label{sec:outline-queries}

The NMC sparsifier that we output for each query is a random 2-out contraction of the graph at the current time. For this construction, we use our maintained forest data structures and build upon the algorithm of Ghaffari et al.~\cite{DBLP:conf/soda/Ghaffari0T20}, who prove the following. 

\begin{theorem}[Weaker version of Theorem 2.1 of Ghaffari et al.~\cite{DBLP:conf/soda/Ghaffari0T20}]
\label{theorem:contractedgraph}
Let $G$ be a simple graph with $n$ vertices, $m$ edges, and minimum degree $\delta$. In $O(m\log n)$ time we can create a contraction $\hatG$ of $G$ that has $O(n/\delta)$ vertices and $O(n)$ edges w.h.p., and preserves all non-trivial minimum cuts of $G$ w.h.p.
\end{theorem}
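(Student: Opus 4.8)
The plan is to construct $\hatG$ in two stages: first reduce the number of vertices to $O(n/\delta)$ by random $2$-out contractions while keeping every non-trivial minimum cut of $G$ intact, and then reduce the number of edges to $O(n)$ by a Nagamochi--Ibaraki forest decomposition~\cite{DBLP:journals/algorithmica/NagamochiI92}.

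The edge-reduction stage is routine. Assume $G$ has at least one non-trivial minimum cut (otherwise there is nothing to preserve and a single $2$-out contraction followed by the forest decomposition suffices). Suppose the first stage outputs a multigraph $G'$ obtained by contracting $G$, on $N=O(n/\delta)$ vertices. Contracting never decreases the minimum cut, and since some non-trivial minimum cut survives, $G'$ has minimum cut value $\lambda$, where $\lambda$ is also the minimum cut value of $G$ and $\lambda\le\delta$ (witnessed by the trivial cut at a minimum-degree vertex). Compute a $(2\delta)$-forest decomposition $H=F_1\cup\dots\cup F_{2\delta}$ of $G'$ in time $O(|E(G')|)$ via a maximum adjacency ordering~\cite{DBLP:books/cu/NI2008}. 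Then $|E(H)|\le 2\delta(N-1)=O(n)$, and $H$ preserves the exact value of every cut of $G'$ of value below $2\delta$ --- in particular of every minimum cut of $G'$. Since $H$ has the same vertex set as $G'$, hence induces the same partition of $V(G)$, it preserves every non-trivial minimum cut of $G$ that survived the first stage, and we output $\hatG:=H$, which has $O(n/\delta)$ vertices and $O(n)$ edges.

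The vertex-reduction stage is the technical heart. By Theorem~\ref{theorem:2outcontraction} one random $2$-out contraction has $O(n/\delta)$ vertices w.h.p., but it keeps a given non-trivial minimum cut intact only with some constant probability $p_0<1$, and a simple graph may have $\Theta(n^2)$ non-trivial minimum cuts (e.g.\ a cycle), so a direct union bound fails. I would instead take $L=\Theta(\log n)$ \emph{independent} $2$-out contractions, with sampled edge sets $E_1,\dots,E_L$. For a fixed non-trivial minimum cut $C$, with probability at least $1-(1-p_0)^L=1-n^{-\Omega(1)}$ at least one $E_j$ contains no edge crossing $C$; a union bound over the $O(n^2)$ non-trivial minimum cuts then shows that w.h.p.\ every one of them is ``cleanly contracted'' in some trial. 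To commit to a single object, I would take $G'$ to be the contraction of $G$ along the common refinement $\mathcal{P}^{*}$ of the component partitions of $G[E_1],\dots,G[E_L]$: a non-trivial minimum cut refined by one trial's partition is refined by $\mathcal{P}^{*}$, so $G'$ preserves all of them w.h.p.

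The main obstacle is to bound the number of parts of $\mathcal{P}^{*}$ by $O(n/\delta)$; a priori the common refinement of $L$ partitions with $O(n/\delta)$ parts each could be far finer. I expect this to need the structural properties of non-trivial minimum cuts of simple graphs --- both sides of such a cut have more than $\delta/2$ vertices (a short counting argument comparing $\binom{|S|}{2}$ with the number of edges inside the smaller side $S$), and the family of these cuts has a laminar/cactus structure that limits how finely a ``good'' set of cuts can split $V$ --- combined with quantitative control of the random $2$-out process, so that the ``disagreement'' among the $L$ partitions stays confined to $O(n/\delta)$ vertices; a plausible fallback is to certify and commit to one single trial whose contraction is simultaneously good for all non-trivial minimum cuts, together with a separate trivial treatment of the case that $\delta$ is a small constant (where the forest decomposition alone already gives $O(n)$ edges). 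For the running time, each trial costs $O(m)$ to draw $2n$ edges and compute the components of $G[E_j]$, the $L=O(\log n)$ trials cost $O(m\log n)$, computing $\mathcal{P}^{*}$ and forming $G'$ costs $O(n\log n+m)$, and the edge-reduction stage adds $O(m)$, for a total of $O(m\log n)$.
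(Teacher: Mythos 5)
Your overall plan --- amplify the constant per-cut preservation probability of a single $2$-out contraction with $\Theta(\log n)$ independent trials, commit to a single contraction, and then sparsify edges --- is the right shape, but your commitment step has a genuine gap that you yourself flag and do not close. Taking the common refinement $\mathcal{P}^*$ of the $L$ component partitions makes the cut-preservation argument easy (a cut respected by one partition is respected by $\mathcal{P}^*$), but it loses all control over the number of parts: a priori $\mathcal{P}^*$ can have $\min(n,(n/\delta)^L)$ parts, i.e.\ up to $n$ for $L=\Theta(\log n)$, which would wipe out the claimed $O(n/\delta)$ vertex bound. Neither of your suggested patches works: the structural facts you cite about non-trivial minimum cuts (each side has $>\delta/2$ vertices; the cactus/laminar structure) do not by themselves limit how finely $L$ unrelated random $2$-out partitions can mutually refine a set $V$, and the ``commit to one trial that is simultaneously good for all non-trivial minimum cuts'' fallback fails because a single trial only preserves each fixed cut with constant probability, so against a potential $\Theta(n^2)$ such cuts no single trial can be expected to be uniformly good.

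The paper resolves exactly this tension by replacing the binary ``intersection of partitions'' with the \emph{voting} procedure of Ghaffari~et~al. (Algorithm~\ref{algorithm:contraction}). For each of $q=\Theta(\log n)$ trials it first computes a $(\delta+1)$-forest decomposition $\hatG_i$ of the $2$-out contraction $G_i$; an edge $e$ of $G$ is kept (not contracted) only if it appears in at least $r=\Theta(\log n)$ of the $\hatG_i$. This threshold is neither the union nor the intersection of the trial partitions. It is tuned so that (i) a fixed non-trivial minimum cut edge, which has a constant probability of surviving each $\hatG_i$, is still preserved in at least $r$ of them w.h.p.\ by a Chernoff bound; (ii) the kept edges form a subset of $\bigcup_i E(\hatG_i)$, and counting with multiplicity gives at most $\tfrac{1}{r}\sum_i|E(\hatG_i)|=O(n)$ of them, which bounds both $|E(\hatG)|$ and, together with the per-trial $O(n/\delta)$ component count, yields the $O(n/\delta)$ vertex bound (Theorem~2.4 in~\cite{DBLP:conf/soda/Ghaffari0T20}). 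Note also that the forest decompositions are used \emph{inside} the voting, per trial, rather than as a separate post-processing stage as in your outline; this is what lets a single step certify both the $O(n)$ edge bound and the $O(n/\delta)$ vertex bound simultaneously. Without replacing your common-refinement step with this (or an equivalent) voting argument, the vertex count claim is unproved.
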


Note that in particular, the contracted graph $\hatG$ created by the above theorem is an NMC sparsifier, as desired.
We first sketch the algorithm that yields Theorem~\ref{theorem:contractedgraph} as described by Ghaffari et al.~\cite{DBLP:conf/soda/Ghaffari0T20}. Then we outline how we can adapt and improve this construction for dynamic graphs, making use of the DSF and DCS data structures. Specifically, to answer a query we show that we can build $\hatG$ in time proportional to the size of $\hatG$, which may be much lower than $O(m\log n)$. The details and a formal analysis can be found in Section~\ref{section:main}.

\paragraph*{Random 2-out Contractions of Static Graphs.}
Ghaffari et al.'s algorithm~\cite{DBLP:conf/soda/Ghaffari0T20} works as follows (see also Algorithm~\ref{algorithm:contraction}).
First, it creates a collection of $q=O(\log n)$ random $2$-out contractions $G_1,\dots,G_q$ of $G$, where every $G_i$ is created with independent random variables.
Now, according to Theorem 2.4 in \cite{DBLP:conf/soda/Ghaffari0T20}, each $G_i$ for $i\in\{1,\dots,q\}$,  has $O(n/\delta)$ vertices w.h.p., and preserves any fixed non-trivial minimum cut of $G$ with constant probability. 

In a second step, they compute a $(\delta+1)$-forest decomposition $\hatG_i$ of $G_i$, for every $i\in\{1,\dots,q\}$, in order to ensure that $\hatG_i$ has $O(\delta\cdot( n/\delta))=O(n)$ edges w.h.p.
Because $G$ has minimum degree $\delta$, every non-trivial minimum cut has value at most $\delta$. Hence, each $\hatG_i$ still maintains every fixed non-trivial minimum cut with constant probability.

\begin{algorithm}[t]
\caption{\textsf{Construction of the contracted graph $\hatG$ in Theorem~\ref{theorem:contractedgraph} }}
\label{algorithm:contraction}
\LinesNumbered
\DontPrintSemicolon
\textbf{input}: a simple graph $G$ in adjacency list representation\;
choose parameters $q,r=\Theta(\log n)$ according to \cite{DBLP:conf/soda/Ghaffari0T20}\;
compute the minimum degree $\delta$ of $G$\;
\For{$i\leftarrow 1$ to $q$}{
  construct a $2$-out contraction $G_i$ of $G$\;
}
\ForEach{$i\in\{1,\dots,q\}$}{
  construct a $(\delta+1)$-forest decomposition $\hatG_i$ of $G_i$\;
}
let $\Econ\leftarrow\emptyset$ \textcolor{cblue}{\tcp{a set of edges of $G$ to contract}}
\ForEach{edge $e$ of $G$}{
  \If{$e$ is preserved in less than $r$ graphs from $\hatG_1,\dots,\hatG_q$}{
    $\Econ\leftarrow \Econ\cup\{e\}$\;
  }
}
\textbf{return} the graph obtained from $G$ by contracting the edges in $\Econ$\;
\end{algorithm}

Finally, they select a subset of edges $\Econ\subseteq E(G)$ to contract by a careful \emph{``voting''} process. Specifically, for every edge $e$ of $G$, they check if it is an edge of at least $r$ graphs from $\hatG_1,\dots,\hatG_q$, where $r$ is a carefully chosen parameter.
If $e$ does not satisfy this property, then it is included in $\Econ$, the set of edges to be contracted. In the end, $\hatG$ is given by contracting all edges from $\Econ$.

\paragraph*{An Improved Construction using Dynamic Forests.}
We now give an overview of our algorithm for efficiently constructing the NMC sparsifier $\hatG$. It crucially relies on having the DSF and DCS data structures initialized to contain the edges of the current graph $G$. For a fully dynamic graph, this is naturally ensured at query time by maintaining the two forest data structures throughout the updates, as described in Section~\ref{sec:dynamicforests}.
Since the goal is to output a graph $\hatG$ with $O(n/\delta)$ vertices and $O(n)$ edges w.h.p., we aim for an 
algorithm that takes roughly $O(n)$ time. The process is broken up into three parts. 

\compactpara{Part 1.} First, we compute the $2$-out contractions $G_1,\dots,G_q$ for $q=O(\log n)$. Each $G_i$ can be computed by sampling, for every vertex $v$ of $G$, two random edges incident to $v$ (independently, with repetition allowed). Since the graph $G$ is dynamic, the adjacency lists of the vertices also change dynamically, and therefore this is not an entirely trivial problem. However, in Appendix~\ref{sec:sampling} we provide an efficient solution that relies only on elementary data structures. 
Specifically, we can support every graph update in worst-case constant time, and every query for a random incident edge to a vertex in worst-case $\tilde{O}(1)$ time. Notice that each $G_i$, for $i\in\{1,\dots,q\}$, is given by contracting a set $E_i$ of $O(n)$ edges of $G$. 

\compactpara{Part 2.} Next, we separately compute a $(\delta+1)$-forest decomposition $\hatG_i$ of $G_i$, for every $i\in\{1,\dots,q\}$. 
Each $\hatG_i$ is computed by only accessing the edges of $E_i$ plus the edges of the output (which are $O(n)$ w.h.p.). 
For this, we rely on the DCS data structure. 
Since in DCS we have complete control over the maintained forest $F$, we can construct it in such a way, that every connected component of $G[E_i]$ induces a subtree of $F$. 
Notice that the connected components of $G[E_i]$ correspond to the vertices of $G_i$. 
This process of modifying $F$ takes $O(|E_i|)=O(n)$ update operations on the DCS data structure. Then, we have established the property that the tree edges that connect vertices of different connected components of $G[E_i]$ correspond to a spanning tree of $G_i$. 
Afterwards, we just repeatedly remove the spanning tree edges, and find replacements using the DCS data structure. These replacements constitute a new spanning forest of $G_i$. 
Thus, we just have to repeat this process $\delta+1$ times to construct the desired $(\delta+1)$-forest decomposition.
Note that every graph $\hatG_i$ is constructed in time roughly proportional to its size (which is $O(n)$ w.h.p.).
Any overhead comes from the use of the DCS data structure.

\compactpara{Part 3.}
Finally, we construct the graph $\hatG$ by contracting the edges $E_{<r}$ of $G$ that appear in less than $r$ graphs from $\hatG_1,\dots,\hatG_q$. From Ghaffari et al.~(\cite{DBLP:conf/soda/Ghaffari0T20}, Theorem 2.4) it follows that $|E\setminus E_{<r}| =|E_{\ge r}|=O(qn)=O(n\log n)$. In the following we provide an algorithm for constructing $\hatG$ with only $O(n + |E_{\ge r}|)$ operations of the DSF data structure.

We now rely on the spanning forest $F$ of $G$ that is maintained by the DSF data structure.
We pick the edges of $F$ one by one, and check for every edge whether it is contained in $E_{<r}$ (it is easy to check for membership in $E_{<r}$ in $O(\log n)$ time).
If it is not contained in $E_{<r}$, then we store it as a \emph{candidate edge}, i.e., 
an edge that possibly is in $\hatG$. In this case, we also remove it from $F$, and the DSF data structure will attempt to fix the spanning forest by finding a replacement edge.
Otherwise, if $e\in E_{<r}$, then we ``fix'' it in $F$, and do not process it again.

In the end all ``fixed'' edges of $F$ form a spanning forest of the
connected components of $G[E_{<r}]$.
Note that the algorithm makes continuous progress: in each step it either identifies an edge of $E_{\ge r}$, or it ``fixes'' an edge of $F$ (which can happen at most $n-1$ times).
 Thus, it performs $O(n+|E_{\ge r}|)=\tildeO(n)$ DSF operations. Since
 we have arrived at a spanning forest of $G[E_{<r}]$, we can build $\hatG$ by contracting the candidate edges stored during this process. 

\section{Analysis}
\label{section:main}

Our main technical tools are Propositions~\ref{proposition:kforests} and \ref{proposition:contract}. Proposition~\ref{proposition:kforests} allows us to create a forest decomposition of a contracted graph without accessing all the edges of the original graph, but roughly only those that are included in the output, and those that we had to contract in order to get the contracted graph. Proposition~\ref{proposition:contract} has a somewhat reverse guarantee: it allows us to create a contracted graph by accessing only the edges that are preserved during the contraction (plus only a small number of additional edges).

In this section we assume that $G$ is a fully dynamic graph that currently has $n$ vertices. We also assume that we have maintained a DCS and a DSF data structure on $G$, which support every operation in $\UCS$ and $\USF$ time, respectively (cf. Section~\ref{sec:dynamicforests}).

\subsection{A k-Forest-Decomposition of the Contraction}
Given a set of edges to contract, the following proposition shows how to compute a $k$ forest decomposition of the induced contracted graph. Crucially, the contracted graph $H$ does not have to be given explicitly and does not need to be computed. Instead, the proposition shows that it is sufficient to know only a set of edges whose contraction yields $H$. Thus, the running time of the construction is independent of the number of edges of $H$. The whole procedure is shown in Algorithm~\ref{algorithm:kforests}.

\begin{restatable}{proposition}
{kforests}
\label{proposition:kforests}
Let $H$ be a contraction of $G$ that results from contracting a given set $\Econ$ of edges in $G$. Let further $k$ be a positive integer and $n_H$ be the number of vertices in $H$.
Then we can construct a $k$-forest decomposition of $H$ in time $O\big((n+kn_H)\cdot\UCS+|\Econ|\log{n}\big)$.
\end{restatable}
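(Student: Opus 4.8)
The goal is to compute a $k$-forest decomposition $F_1 \cup \dots \cup F_k$ of the contraction $H = G / \Econ$ without ever materializing $H$, using only the DCS data structure on $G$. The key observation is that the vertices of $H$ are in bijection with the connected components of $G[\Econ] = (V, \Econ)$, and an edge of $G$ is preserved in $H$ exactly when its two endpoints lie in different components of $G[\Econ]$. So a "spanning forest of $H$" is, in $G$-terms, a set of edges whose endpoints lie in distinct $G[\Econ]$-components and that, together with $\Econ$, spans all of $G$ --- equivalently, a set of edges that, when added to a spanning forest of $G[\Econ]$, yields a spanning forest of $G$.

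First I would initialize the DCS forest $F$ to be empty and then insert all $|\Econ|$ edges of $\Econ$ into $F$ via $\mathtt{insert}_F$; since every such edge either joins two trees or is dropped, after this step the trees of $F$ are exactly (spanning trees of) the connected components of $G[\Econ]$. This costs $O(|\Econ|)$ DCS operations, i.e. $O(|\Econ| \cdot \UCS)$ time; alternatively one can absorb this into the $|\Econ|\log n$ term by building the component structure with a union-find structure and then issuing the DCS inserts. Now I extract the first forest $F_1$ of the decomposition of $H$: repeatedly, for each vertex $v$, call $\mathtt{find\_cutedge}(v)$; if it returns an edge $e$, that edge crosses between two $G[\Econ]$-components, so it corresponds to an edge of $H$ not yet in the current partial forest --- I record $e$ as an $F_1$-edge of $H$ and insert it into $F$ via $\mathtt{insert}_F$, which merges the two trees. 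Iterating over all vertices (using a standard sweep that skips vertices whose tree has already been shown to have no outgoing cutedge) produces a spanning forest of $H$ after at most $n_H - 1$ successful $\mathtt{find\_cutedge}$ calls plus $O(n_H)$ or $O(n)$ unsuccessful ones; each call is one DCS operation. To get $F_2$, I delete the recorded $F_1$-edges of $H$ from $F$ (via $\mathtt{delete}_F$), keeping the $\Econ$-edges in $F$, and repeat the $\mathtt{find\_cutedge}$ sweep --- this finds a spanning forest of $H \setminus F_1$, i.e. $F_2$. Repeating $k$ times yields all of $F_1, \dots, F_k$.

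The running-time accounting is then routine: the setup costs $O(|\Econ|\cdot\UCS)$ (or $O(|\Econ|\log n)$ via union-find), and each of the $k$ rounds performs $O(n)$ DCS operations for the sweep --- $O(n_H)$ successful $\mathtt{find\_cutedge}$ calls, and $O(n)$ calls total if we are not careful about skipping, or $O(n_H)$ with a suitable charging argument --- plus $O(n_H)$ deletions to reset for the next round, and we pay $O(\log n)$ per edge to test/record things, giving $O\big((n + k n_H)\cdot\UCS + |\Econ|\log n\big)$ overall, matching the claimed bound. I would state carefully that the $n$ (rather than $n_H$) in the first term comes from the sweep over vertices of $G$ in each round, and note that one can reduce the number of unsuccessful $\mathtt{find\_cutedge}$ calls per round to $O(n_H)$ by tracking which trees have been "exhausted."

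**The main obstacle.** The delicate point is correctness of the iterated extraction: I must argue that after inserting the $\Econ$-edges plus the previously-found forests $F_1, \dots, F_{i-1}$ of $H$ into the DCS forest $F$, the edges returned by the $\mathtt{find\_cutedge}$ sweep are precisely (a spanning forest of) the edges of $H$ that remain after removing $F_1 \cup \dots \cup F_{i-1}$ --- in other words, that the DCS "tree of $F$ containing $v$" faithfully represents the corresponding super-vertex-component of $H \setminus (F_1 \cup \dots \cup F_{i-1})$, and that $\mathtt{find\_cutedge}$ never returns an edge internal to a $G[\Econ]$-component (it cannot, since such an edge has both endpoints in the same tree) nor an edge already selected into an earlier $F_j$ (handled because those are currently in $F$). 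I also need to confirm that $\mathtt{find\_cutedge}$ returning "no edge" for every vertex certifies that the current $F$ restricted to crossing edges is a maximal, hence spanning, forest of the current $H$. These are not hard facts but they require spelling out the dictionary between $G[\Econ]$-components-with-extra-edges and vertices-of-$H$ precisely, and making sure the bijection is preserved as edges are added and removed across the $k$ rounds.
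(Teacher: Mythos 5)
Your high-level blueprint (build a DCS forest over $\Econ$, sweep with $\mathtt{find\_cutedge}$ to extract a spanning forest of $H$, repeat $k$ times) matches the paper, but there is a genuine error in how you move from one round to the next. You propose to remove the $F_1$-edges from $F$ via $\mathtt{delete}_F$ and keep them in $G$, and then sweep again. But $\mathtt{find\_cutedge}(v)$ returns \emph{any} edge of $G$ with one endpoint inside and one outside the tree of $F$ containing $v$; since the $F_1$-edges are still present in $G$ and are no longer tree edges, nothing prevents the data structure from returning them again in round $2$. So rounds $2,\dots,k$ can simply rediscover $F_1$, and you do not obtain a $k$-forest decomposition. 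Your own ``main obstacle'' paragraph in fact assumes the opposite of your plan: there you claim earlier forests are excluded ``because those are currently in $F$,'' but your algorithm has just deleted them from $F$ --- and keeping them in $F$ would not help either, since then $F$ would be a spanning forest of $G$ and every $\mathtt{find\_cutedge}$ would fail. The fix is what the paper does: after each round, remove the found edges from the DCS graph itself via $\mathtt{delete}_G$ (which also removes them from $F$), bank them in a list $S$, and at the very end restore the state with $\mathtt{delete}_F$/$\mathtt{insert}_G$. That contributes $O(kn_H)$ DCS operations, exactly as in the bound.

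One smaller point: calling $\mathtt{insert}_F$ on every edge of $\Econ$ is $|\Econ|$ DCS operations, which is not within the claimed $O\big((n+kn_H)\UCS + |\Econ|\log n\big)$ budget. You gesture at using union-find, but the right statement is to use a DSU to \emph{filter}: only call $\mathtt{insert}_F(e)$ for the at most $n-1$ edges whose endpoints lie in different DSU components, paying $O(|\Econ|\log n)$ for the DSU and $O(n\cdot\UCS)$ for the DCS inserts. This is what the paper does, and it is needed to meet the stated bound.
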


\begin{algorithm}[t]
\caption{\textsf{Compute a $k$-forest decomposition of the graph that is formed by contracting a set of edges $E'$ of $G$}}
\label{algorithm:kforests}
\LinesNumbered
\DontPrintSemicolon
let $F$ be the empty DCS forest\;
\ForEach{edge $e\in E'$}{
  \If{the endpoints of $e$ belong to different trees of $F$}{
     $\mathtt{DCS.insert}_F(e)$\;
  }
}
let $\mathcal{V}$ be a set that consists of one vertex from every tree of $F$\;
set $S\leftarrow\emptyset$\;
\For{$i\leftarrow 1$ to $k$}{
\label{algorithm:kforests:line1}
  set $L\leftarrow\emptyset$ \textcolor{cblue}{\tcp{$L$ will contain the edges of the current spanning forest}}
  \ForEach{$v\in\mathcal{V}$}{
    let $e\leftarrow\mathtt{DCS.find\_cutedge}(v)$\;
    \While{$e\neq\bot$}{
    \label{algorithm:kforests:while}
      $\mathtt{DCS.insert}_F(e)$, and append $e$ to $L$\;
      $e\leftarrow\mathtt{DCS.find\_cutedge}(v)$\;
    }
  }
  \ForEach{$e\in L$}{
    $\mathtt{DCS.delete}_G(e)$, and append $e$ to $S$\;
  }
}
\label{algorithm:kforests:line2}

use $\mathtt{DCS.delete}_F$ to remove all the edges from $F$\;
\ForEach{$e\in S$}{
  $\mathtt{DCS.insert}_G(e)$ \textcolor{cblue}{\tcp{restore DCS to its original state}}
}
\textbf{return} $S$\;
\end{algorithm}

\noindent Note that the number of DCS operations depends only on (1) the number $n$ of vertices of $G$, (2) the number $n_H$ of vertices of $H$, and (3) the number $k$ of forests that we want to build.  

\begin{proof}
We assume the DCS data structure contains all edges of the current graph $G$ and an empty forest $F$.
We first want to construct $F$ as a spanning forest of $G[\Econ]$.
To do this, we process all edges $e \in \Econ$ one by one, checking for each edge $e$ if its endpoints belong to the same tree of $F$. 
If yes, we do nothing. Otherwise, we insert $e$ into $F$ as a tree edge. To perform this check efficiently, we use a disjoint-set union (DSU) data structure that supports a sequence of $\ell$ union operations in $O(\ell\log n)$ total time, and every query in constant time (see e.g.~\cite{DBLP:books/daglib/0023376}).
Now $F$ is a spanning forest of $G[\Econ]$. Note that this took $O(n\UCS)$ time for the DCS operations plus $O(|\Econ|\log n)$ for the DSU operations. 

Next we compute the connected components of $F$ and choose a representative vertex from each component. This takes $O(n)$ time (by processing all trees of $F$). Note that there are $n_H$ components as there is a one-to-one correspondence between the components of $F$ and the vertices of $H$.
Now, for each representative $v$, we repeatedly call the operation $\mathtt{find\_cutedge}(v)$ of the DCS data structure to get an edge with exactly one endpoint in the tree of $F$ that contains $v$. As long as such an edge is found, we insert it into $F$ and into a list $L$ and repeat. Then we proceed with the next representative.

In this way, we have built a spanning forest of $H$, which consists of the edges in $L$ by making $O(n_H)$ calls to the DCS data structure. Then we remove all edges in $L$ from the graph in the DCS data structure using $\mathtt{delete}_G$, store them in a list $S$, and repeat the same process $k$ times. In the end, the edges in $S$ form the desired $k$-forest decomposition of $H$.

Finally, we re-insert the edges from $S$ into the DCS data structure using $\mathtt{insert}_G$ and delete the forest $F$ using $\mathtt{delete}_F$ in order to restore its original state. 
Note that $S$ and $F$ contain $O(kn_H)$ and $O(n)$ edges respectively.
This algorithm thus runs in $O\big((n+kn_H)\cdot\UCS+|\Econ|\log{n}\big)$ time.
\end{proof}

\subsection{Building the Contracted Graph}

A contracted graph can naturally be computed from its defining set of contraction edges $\Econ$ in time proportional to the size of this set $|\Econ|$.
Recall that in our case $\Econ$ is the result of the ``voting'' procedure across all generated $\delta$-forest decompositions (cf. Section~\ref{sec:outline-queries}), which is rather expensive to compute.
We hence use a different construction that does not need to know $\Econ$ explicitly. Instead, it relies on an efficient oracle to certify that a given edge is not contained in $\Econ$.

\begin{restatable}{proposition}
{propositioncontract}
\label{proposition:contract}
Let $H$ be a contraction of $G$ that results from contracting a set $\Econ$ of edges in $G$ and let $\Epre$ be the set of edges of $G$ that are preserved in $H$. Suppose that there
is a set of edges $\Echeck$ with $\Epre\subseteq \Echeck$ and $\Echeck\cap\Econ=\emptyset$, for which 
we can check membership in time $\mu$.  
Then we can construct $H$ in time $O(n\mu+|\Echeck|\cdot(\USF+\mu))$.
\end{restatable}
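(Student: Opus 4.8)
The plan is to realize the idea sketched in Part~3 of Section~\ref{sec:outline-queries}: instead of computing $\Econ$ or $\Epre$ explicitly, I start from the spanning forest $F$ maintained by the DSF data structure on $G$ and gradually turn it into a spanning forest of $G[\Econ]$, using only the $\Echeck$-membership oracle, and then read off $H$ from the connected components of $F$ together with the (few) edges that were deleted along the way. Concretely, I keep a worklist that at all times contains exactly the tree edges of $F$ not yet marked \emph{fixed} (initially all of $F$). While the worklist is nonempty I pop an edge $e$ and test whether $e\in\Echeck$ in time $\mu$. If $e\notin\Echeck$ I mark $e$ fixed and never touch it again; if $e\in\Echeck$ I append $e$ to a list $\mathcal{C}$ of removed edges, call $\mathtt{DSF.delete}(e)$ in time $\USF$, and, if a replacement tree edge is reported, push it onto the worklist. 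Maintaining the worklist and the set of current tree edges costs $O(1)$ per DSF event. When the worklist empties, every tree edge of $F$ is fixed; I then compute the components of $F$ in $O(n)$ time, label each vertex of $G$ by its component, scan $\mathcal{C}$, and output the multigraph on these components whose edges are the images of those $e\in\mathcal{C}$ with endpoints in distinct components (the remaining ones become self-loops and are dropped). Optionally I re-insert the edges of $\mathcal{C}$ to restore the DSF to represent $G$.

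For the running time, note that each iteration either fixes a tree edge or removes an edge. A fixed edge stays in $F$ forever, so there are at most $n-1$ fixings; a removed edge lies in $\Echeck$ and is removed at most once (and is not re-inserted before the final restoration), so $|\mathcal{C}|\le|\Echeck|$. Hence there are $O(n+|\Echeck|)$ iterations, each costing $O(\mu)$ for the membership test plus $O(\USF)$ in the removal case, which sums to $O\big((n+|\Echeck|)\mu+|\Echeck|\USF\big)=O\big(n\mu+|\Echeck|(\USF+\mu)\big)$; the final extraction of $H$ and the optional restoration fit within this bound.

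For correctness, let $G':=G\setminus\mathcal{C}$ be the graph held by the DSF at termination. Since $\mathcal{C}\subseteq\Echeck$ and $\Echeck\cap\Econ=\emptyset$ we have $\Econ\subseteq E(G')$, so each connected component of $G[\Econ]$ lies in a single component of $G'$. Conversely, at termination every tree edge of $F$ is fixed, hence lies outside $\Echeck\supseteq\Epre$; an edge outside $\Epre$ has both endpoints in the same component of $G[\Econ]$, so every tree of $F$, and therefore---$F$ being a spanning forest of $G'$---every component of $G'$ lies in a single component of $G[\Econ]$. Thus the components of $G'$ are exactly the connected components of $G[\Econ]$, i.e. the vertices of $H$, and the trees of $F$ are precisely those vertex sets. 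Finally, an edge of $\Epre$ still present in $G'$ would join two distinct components of $G'$, which is impossible in a graph with spanning forest $F$, so $\Epre\subseteq\mathcal{C}$; hence the edges selected from $\mathcal{C}$ in the last step are exactly $\Epre$ mapped to component labels, which is the edge multiset of $H$.

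The main point to get right is the progress/charging argument, i.e. ruling out that the replacement edges returned by the DSF blow up the number of iterations; this is pinned down precisely by the two observations above (fixed edges never leave $F$, removed edges never come back), which cap the number of iterations---and hence the number of DSF operations and oracle calls---at $O(n+|\Echeck|)$.
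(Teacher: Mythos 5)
Your proposal is correct and follows essentially the same approach as the paper's own proof: start from the DSF spanning forest, pop tree edges from a worklist, keep those outside $\Echeck$, delete those inside $\Echeck$ (pushing any replacement back onto the worklist), then label the resulting components and filter the removed edges; the paper also caps the work at $O(n+|\Echeck|)$ DSF operations and membership tests by the same fix--or--delete progress argument. The only cosmetic difference is that your correctness argument is spelled out in both directions (components of $G\setminus\mathcal{C}$ refine and are refined by those of $G[\Econ]$), where the paper states the conclusion more tersely.
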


\noindent
Note that the number of DSF operations is proportional to the number of edges in $\Echeck$ (the set $\Econ$ is not required as input to the algorithm).
Thus, this algorithm becomes more efficient if only few edges are contained in $E(G)\setminus \Econ$ (since $\Echeck\subseteq E(G)\setminus\Econ$). In our application, we will have $\mu=O(\log n)$. 

\begin{proof}
Let $F$ be the spanning forest of $G$ that is maintained by the DSF data structure. We start 
by putting the edges of $F$ on a stack $S$. While $S$ is not empty, we pop an edge $e$ from $S$, and check whether $e$ is in $\Echeck$. If $e\notin\Echeck$, then we do nothing (i.e., we keep $e$ on $F$).
Otherwise we store $e$ in a list $L$ of candidate edges (edges that may be preserved in $H$), and call $\mathtt{delete}(e)$ to remove it from the DSF data structure. If this deletion returns a replacement edge $e'$ for $e$, we put $e'$ on the stack. 

After these deletions, $F$ is a spanning forest for the connected components of $G[\Econ]$. This means, there are no more edges of $G$ that connect different connected components of $G[\Econ]$. Consequently, the list $L$ contains all edges of $\Epre$ (and maybe some extra edges). 

To create $H$, we first assign labels to the vertices of 
$G$, so that a label at a vertex $v$ identifies the connected
component of $G[\Econ]$ that $v$ belongs to. This can be done in time $O(n)$ by a graph traversal on the edges of $F$. Note that there is a one-to-one correspondence between the
connected components of $G[\Econ]$ and the vertices of $H$. Thus, we may use the labels for the connected components of $G[\Econ]$ as the vertex set of $H$.
Now we process the edges of $L$ one by one. If an edge $e\in L$ has both endpoints in the same connected component of $G[\Econ]$ we ignore it. Otherwise, we create a new edge in $H$ between
the labels of the endpoints of $e$.
Thus we have constructed $H$. Finally, we re-insert all edges from $L$ into the DSF data structure.

Now we argue about the running time of this procedure. In the first phase
when we process the stack $S$, we continuously make progress: either we identify an edge that will belong to the final spanning forest, or we process an edge that belongs to $\Echeck$. Hence, we process $O(n+|\Echeck|)$ edges in this phase, and for each of those we have to check for membership in $\Echeck$. This takes $O((n+|\Echeck|)\mu)$ time.
Every edge from $\Echeck$ that we process has to be deleted from the DSF data structure. This gives
an additional term of $O(|\Echeck|\cdot\USF)$.
Then, the computation of the vertex sets of the connected components of $G[\Econ]$ takes time $O(n)$. Finally, the construction of $H$ takes time $O(n+|\Echeck|)$. Thus, the total running time is at most $O(n\mu+|\Echeck|\cdot(\USF+\mu))$. 
\end{proof}

\subsection{Constructing an NMC sparsifier}

We can now state our result for computing an NMC sparsifier of a simple graph using Propositions~\ref{proposition:kforests} and \ref{proposition:contract}. 
Compare this to Theorem~\ref{theorem:contractedgraph} and note how Theorem~\ref{theorem:constructing-sparsifier} requires initialized DSF and DCS data structures but has a running time that is independent of the number of edges in $G$.
An outline of this procedure is given in Section~\ref{sec:outline-queries}.

\begin{restatable}{theorem}{theoremsparsifier}\label{theorem:constructing-sparsifier}
Let $G$ be a simple graph with $n$ vertices that has minimum degree $\delta>0$ and is maintained in a DSF and a DCS data structure. 
Then, with high probability we can construct an NMC sparsifier of $G$ that has $O(n/\delta)$ vertices and $O(n)$ edges.
W.h.p this takes $\tildeO(n\cdot(\USF+\UCS))$ time.
\end{restatable}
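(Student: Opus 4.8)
The plan is to execute the static algorithm of Ghaffari et al.\ (Algorithm~\ref{algorithm:contraction}, whose guarantees are Theorem~\ref{theorem:contractedgraph}) on the current graph $G$, but to replace its two edge-scanning phases by data-structure-driven variants so that we never look at more than $\tildeO(n)$ edges of $G$. Besides the maintained DSF and DCS structures we also maintain the elementary sampling structure of Appendix~\ref{sec:sampling} (worst-case $O(1)$ per update, worst-case $\tildeO(1)$ per random-incident-edge query) and the current minimum degree $\delta$ (trivially maintainable). We will execute exactly the steps of Algorithm~\ref{algorithm:contraction}, using the particular $(\delta+1)$-forest decompositions produced in Part~2 below; this is unproblematic because Ghaffari et al.'s correctness argument applies to any valid forest decomposition. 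Consequently the guarantees that the output has $O(n/\delta)$ vertices and preserves every non-trivial minimum cut w.h.p.\ are inherited verbatim from Theorem~\ref{theorem:contractedgraph}, and only the running time needs a new argument.

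For Part~1, for $i=1,\dots,q$ with $q=\Theta(\log n)$ we draw a random $2$-out contraction $G_i$ of $G$ by asking the sampling structure for two random incident edges at every vertex, producing an edge set $E_i$ with $|E_i|\le 2n$; this costs $\tildeO(n)$ worst-case time in total, and by Theorem~\ref{theorem:2outcontraction} the number $n_{G_i}$ of vertices of $G_i$ (i.e., connected components of $G[E_i]$) is $O(n/\delta)$ w.h.p. For Part~2, for each $i$ we call Proposition~\ref{proposition:kforests} with $H=G_i$, $\Econ=E_i$, and $k=\delta+1$, obtaining a $(\delta+1)$-forest decomposition $\hatG_i$ returned as a set of edges of $G$; each call costs $O\bigl((n+(\delta+1)\,n_{G_i})\UCS+|E_i|\log n\bigr)=O(n\UCS+n\log n)$ w.h.p., so Part~2 takes $\tildeO(n\UCS)$ time w.h.p.\ (the proposition restores the DCS structure between calls). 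Each $\hatG_i$ has $|E(\hatG_i)|\le(\delta+1)(n_{G_i}-1)=O(n)$ edges w.h.p. Letting $E_{\ge r}$ be the set of edges of $G$ preserved in at least $r$ of the $\hatG_i$, and using that $r=\Theta(q)$ together with $r\,|E_{\ge r}|\le\sum_{i=1}^q|E(\hatG_i)|$, we obtain $|E_{\ge r}|=O(n)$ w.h.p. Finally, by sorting the $O(qn)=O(n\log n)$ edges occurring across $\hatG_1,\dots,\hatG_q$ in $\tildeO(n)$ time we build a dictionary that stores, for each such edge, the number of $\hatG_i$ containing it; this lets us test membership in $E_{\ge r}$ in time $\mu=O(\log n)$.

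For Part~3, we invoke Proposition~\ref{proposition:contract} with $\Econ=E_{<r}:=E(G)\setminus E_{\ge r}$ as the set of edges to contract and with $\Echeck=E_{\ge r}$: the preserved set $\Epre$ satisfies $\Epre\subseteq E_{\ge r}=\Echeck$, we have $\Echeck\cap\Econ=\emptyset$, and $\Echeck$ admits an $O(\log n)$-time membership test, so the proposition builds the contracted graph $\hatG$ — our NMC sparsifier — in time $O\bigl(n\mu+|\Echeck|(\USF+\mu)\bigr)=O(n\USF+n\log n)=\tildeO(n\USF)$ w.h.p.\ (and restores the DSF structure at the end). Adding up the three parts gives the claimed $\tildeO(n(\USF+\UCS))$ bound; a union bound over the $O(\log n)$ events that each $n_{G_i}$ is $O(n/\delta)$, that the voting step of Algorithm~\ref{algorithm:contraction} preserves all non-trivial minimum cuts, and that $|E_{\ge r}|=O(n)$, shows that the size, correctness, and running-time claims all hold w.h.p.

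The crux is not any single calculation but the conceptual reductions already packaged in Propositions~\ref{proposition:kforests} and~\ref{proposition:contract}: they are exactly what lets the cost of the two originally $\Theta(m)$-time phases be governed by $n$, $n_{G_i}$, $\delta$, and $|E_{\ge r}|$ rather than by $m$. Given those, the remaining choices — taking $\Echeck=E_{\ge r}$ with its $O(\log n)$-time membership oracle, and observing that the averaging bound $|E_{\ge r}|=O(n)$ simultaneously controls the running time and the output size — are routine.
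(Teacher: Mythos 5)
Your proof is correct and follows essentially the same three-part structure as the paper's own argument: sample the $q$ random $2$-out contractions with the Appendix-\ref{sec:sampling} structure, build each $(\delta{+}1)$-forest decomposition via Proposition~\ref{proposition:kforests}, and assemble $\hatG$ via Proposition~\ref{proposition:contract} with $\Echeck=E_{\ge r}$ backed by a logarithmic-time membership dictionary. The only cosmetic differences are that you make the averaging bound $r\,|E_{\ge r}|\le\sum_i|E(\hatG_i)|$ explicit (the paper just says the total is $\tildeO(n)$ and so is $|E_{\ge r}|$) and you build the membership oracle by sorting rather than by inserting into a balanced BST with counters; both are immaterial to the $\tildeO(n(\USF+\UCS))$ bound.
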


\begin{proof}
We begin by computing $q=O(\log{n})$ $2$-out contractions $G_1,\dots,G_q$ of $G$. For each contraction, we need to sample two random edges incident to every node independently and with repetition allowed. Since the adjacency lists of the vertices change dynamically, we use the data structure described in Appendix~\ref{sec:sampling}, which supports sampling a random element from a dynamic list in worst-case $O(\log n)$ time. Thus, the edge-sets that induce the $2$-out contractions $G_1,\dots,G_q$ can be sampled in $\tildeO(n)$ total time.

According to Theorem~\ref{theorem:2outcontraction}, every $2$-out contraction $G_i\in\{G_1,\dots,G_q\}$ has $O(n/\delta)$ vertices w.h.p. We use the algorithm described in Proposition~\ref{proposition:kforests} to construct a $(\delta+1)$-forest decomposition $\hatG_i$ of $G_i$, for every $i\in\{1,\dots,q\}$.
Note that $\delta$ can be computed in time $O(n)$ by traversing all vertices of $G$.
Since every $G_i$ has $O(n/\delta)$ vertices w.h.p. and is the result of contracting $O(n)$ edges, this takes time $O\big((n+\delta\cdot O(n/\delta))\UCS+O(n)\log{n})=\tildeO(n\cdot\UCS\big)$ w.h.p.\ for each $G_i$, and, hence, $\tildeO(n\cdot\UCS)$ w.h.p. in total.

In order to get the final contraction $\hatG$, we have to perform the ``voting'' process that selects the edges $\Econ$ of $G$ to contract. Recall that an edge of $G$ belongs to $\Econ$ iff it is preserved in less than $r$ graphs $\hatG_1,\dots,\hatG_q$. Thus, we apply Proposition~\ref{proposition:contract} with 
$\Econ=E_{<r}$ and $\Echeck=E_{\ge r}$, where $E_{\ge r}$ (resp., $E_{<r}$) is the set of edges that are preserved by at least (resp., strictly less than) $r$ graphs from $\{\hatG_1,\dots,\hatG_q\}$.
Observe that this choice fulfills the requirements of Proposition~\ref{proposition:contract}.

It remains to show how we can check for membership in $\Echeck=E_{\ge r}$. For
this we insert every edge from $E(\hatG_1)\cup\dots\cup E(\hatG_q)$ into a balanced binary search tree (BST),
and maintain a counter of how many times it occurs in one of 
the graphs $\hatG_1,\dots,\hatG_q$. Then, for a membership query we can just check in $O(\log n)$ time if the counter-value is at least $r$ (an edge that is not in the BST implicitly has a counter of $0$).

Since the total number of edges in all graphs $\hatG_1,\dots,\hatG_q$ is $\tildeO(n)$ w.h.p., we have that the number of edges in $\Echeck$ is at most $\tildeO(n)$ w.h.p. Thus, Proposition~\ref{proposition:contract} implies that $\hatG$ can be constructed in time $O(n\log{n}+\tildeO(n)(\USF+\log{n}))=\tildeO(n\cdot\USF)$ time w.h.p. 

In total, the construction of the NMC sparsifier takes $\tildeO(n\cdot(\USF+\UCS))$ time w.h.p.
\end{proof}

\subsection{Fully Dynamic Graphs}

The result of Theorem~\ref{theorem:constructing-sparsifier} can easily be extended to fully dynamic simple graphs by maintaining the DSF and DCS data structures throughout the updates of the graph. These forest data structures can be realized in different ways, as described in Section~\ref{sec:dynamicforests}. Depending on this choice we get a different result, and this is how we derive Theorem~\ref{theorem:main}.

\maintheorem*
\begin{proof}
    Each update to $G$ initiates an update to the DSF and DCS data structures, and also to the data structure for sampling from dynamic lists described in Appendix~\ref{sec:sampling}. Since the latter supports every update in worst-case constant time, we conclude that every update on $G$ is processed in $O(\USF+\UCS)$ time.
    Since the forest data structures are properly maintained, we can use Theorem~\ref{theorem:constructing-sparsifier} to answer queries for an NMC sparsifier in $\tildeO(n\cdot(\USF+\UCS))$ time w.h.p. The update times for the forest data structures can be chosen according to Lemma~\ref{lemma:dyn-forest-datastructures}.
\end{proof}
If $G$ is disconnected, all minimum cuts have value $0$ and an NMC sparsifier $H$ is by definition only required to have no edges between different connected components of $G$. Crucially, this implies that there is no guarantee that any information of the minimum cuts within each connected component of $G$ is preserved in $H$.
In this case, however, we can easily strengthen the result by applying Theorem~\ref{theorem:constructing-sparsifier} to each connected component of $G$ individually.

\begin{corollary}\label{corollary:ccNMC}
    Let $G$ be a fully dynamic simple graph, and let $C$ be a connected component of $G$ that has $n_C$ vertices and minimum degree $\delta>0$. Then, the data structure of Theorem~\ref{theorem:main}, can output an NMC sparsifier of $C$ that w.h.p. has $O(n_C/\delta)$ vertices and $O(n_C)$ edges. The update and query time guarantees are the same as in Theorem~\ref{theorem:main}, except that $``n"$ is replaced by $``n_C"$.
\end{corollary}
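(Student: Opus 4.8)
The plan is to re-run the construction behind Theorem~\ref{theorem:constructing-sparsifier}, but restricted to the single component $C$ rather than to all of $G$. The observation that makes this clean is that, since $C$ is a connected component of $G$, every edge incident to a vertex of $C$ has both endpoints in $C$. Consequently the adjacency list of a vertex $v\in C$ stored by the dynamic-list sampler of Appendix~\ref{sec:sampling} is exactly the adjacency list of $v$ in $C$, and the DSF and DCS data structures maintained on $G$, when queried only with vertices and edges of $C$, behave exactly as if they were data structures on $C$ alone. So I would first extract, in $O(n_C)$ time, the vertex set $V(C)$ together with the minimum degree $\delta$ of $C$: given a vertex of $C$ as part of the query (which identifies the component), the tree of the DSF spanning forest $F$ containing it is a spanning tree of $C$ on $n_C-1$ edges, so listing this tree yields $V(C)$, and the degree of each $v\in C$ equals the length of its adjacency list, available from the sampler in $O(1)$ time.

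Next I would carry out the three parts of the proof of Theorem~\ref{theorem:constructing-sparsifier} verbatim with $G$ replaced by $C$. In Part~1 we sample $q=\Theta(\log n)$ random $2$-out contractions $C_1,\dots,C_q$ of $C$ by drawing two random incident edges from every vertex of $C$; this takes $\tildeO(n_C)$ time and each $C_i$ is a contraction of $O(n_C)$ edges. In Part~2 we apply Proposition~\ref{proposition:kforests} to each $C_i$ with $k=\delta+1$; by Theorem~\ref{theorem:2outcontraction} each $C_i$ has $O(n_C/\delta)$ vertices w.h.p., so building $\hatC_i$ costs $O\big((n_C+(\delta+1)\cdot O(n_C/\delta))\UCS+O(n_C)\log n\big)=\tildeO(n_C\cdot\UCS)$ per $C_i$ and $\tildeO(n_C\cdot\UCS)$ overall. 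In Part~3 we run the ``voting'' step through Proposition~\ref{proposition:contract} with $\Econ=E_{<r}$ and $\Echeck=E_{\ge r}$, testing membership in $E_{\ge r}$ in $O(\log n)$ time via a balanced search tree over the $\tildeO(n_C)$ edges of the decompositions $\hatC_1,\dots,\hatC_q$; this costs $\tildeO(n_C\cdot\USF)$ w.h.p. Hence the query takes $\tildeO(n_C\cdot(\USF+\UCS))$ time, which specializes to $\hatO(n_C)$ or $\tildeO(n_C)$ according to the realization of the forest data structures (Lemma~\ref{lemma:dyn-forest-datastructures}), while the update time is unchanged from Theorem~\ref{theorem:main}. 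Correctness is immediate: the guarantees of Theorem~\ref{theorem:contractedgraph} and Theorem~\ref{theorem:2outcontraction} depend only on the vertex count and minimum degree of the contracted graph, so applying them to $C$ with minimum degree $\delta$ yields an NMC sparsifier of $C$ with $O(n_C/\delta)$ vertices and $O(n_C)$ edges w.h.p.

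The only point that needs care — the main, if minor, obstacle — is to make sure that none of the ``$O(n)$''-time steps buried inside Propositions~\ref{proposition:kforests} and~\ref{proposition:contract} reintroduces a dependence on the global $n$. In Proposition~\ref{proposition:kforests} the DCS forest $F$ is built only from edges of $C_i\subseteq E(C)$, so all of its non-singleton trees lie inside $C$; enumerating these trees and choosing one representative per tree costs $O(n_C)$ using the precomputed list $V(C)$, and the final $\mathtt{delete}_F$ / $\mathtt{insert}_G$ clean-up touches only $O((\delta+1)\cdot|V(C_i)|)=O(n_C)$ edges. In Proposition~\ref{proposition:contract} the relevant spanning forest is $F$ restricted to $C$, which has $n_C-1$ edges, so pushing it on the stack, labeling the components of $C[\Econ]$, and assembling $H$ are all $O(n_C)$. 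Thus every occurrence of $n$ in the running-time bounds of these two propositions may be replaced by $n_C$, and the corollary follows by reading the proofs of Theorem~\ref{theorem:constructing-sparsifier} and Theorem~\ref{theorem:main} with this substitution.
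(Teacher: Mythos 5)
Your proof is correct and follows essentially the same approach as the paper: restrict the entire construction of Theorem~\ref{theorem:constructing-sparsifier} to the component $C$, and observe that because $C$ is a connected component, the DSF/DCS data structures and the sampler behave on $C$ exactly as they would on a graph consisting of $C$ alone, so Propositions~\ref{proposition:kforests} and~\ref{proposition:contract} run in time proportional to $n_C$. The paper's proof is just a terse version of this observation, and your additional care about the hidden $O(n)$-time steps in the propositions is a correct and welcome elaboration.
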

\begin{proof}
This is an immediate consequence of our whole approach for maintaining an NMC sparsifier. Specifically, it is sufficient to run the construction of the NMC sparsifier on the specified connected component $C$ of the graph. The important observation is that Propositions~\ref{proposition:kforests} and \ref{proposition:contract}, when applied on $C$, take time proportional to its size. 
\end{proof}

\section{Applications}
\label{sec:applications}

\subsection{A Cactus Representation of All Minimum cuts in Dynamic Graphs}
\label{sec:computingcactus}

It is well-known that a graph with $n$ vertices has $O(n^2)$ minimum cuts, all of which can be represented compactly with a data structure of $O(n)$ size that has the form of a cactus graph (see \cite{DBLP:books/cu/NI2008} for details).
As a first immediate application of our main Theorem~\ref{theorem:main}, we show how the NMC sparsifier $\hatG$ of any fully dynamic simple graph $G$ can be used to quickly compute this cactus representation.
% This result is summarized in Theorem~\ref{theorem:cactus}.

\cactustheorem*
\begin{proof}
We initialize the data structure and follow the updates as in Theorem~\ref{theorem:main}.
To answer a query, we first construct the NMC sparsifier $\hatG$ and apply the algorithm from \cite{DBLP:conf/soda/KargerP09} to construct a cactus representation $\mathcal{R}$ for the minimum cuts of $\hatG$. This takes $\tildeO(|\hatG|) = \tildeO(n)$ time and as a byproduct, we also get the value $\lambda$ of the edge connectivity of $\hatG$. Now we follow the same procedure as in \cite{DBLP:journals/jacm/KawarabayashiT19}, to derive a cactus for $G$ from $\mathcal{R}$. We distinguish three cases:

If $\lambda<\delta$, then $\mathcal{R}$ is a cactus for the minimum cuts of $G$. If $\delta<\lambda$, then all minimum cuts of $G$ are trivial. In this case a cactus for $G$ has the form of a star, where the central node represents all vertices of $G$ with degree $>\delta$, the remaining nodes correspond to the vertices with degree $\delta$, and every edge of the cactus is associated with the set of edges incident to the corresponding vertex of $G$. Finally, if $\lambda=\delta$, then we possibly have to enrich $\mathcal{R}$ with more nodes that correspond to the vertices of $G$ with degree $\delta$. Every such vertex $v$ is mapped by the cactus map of $\mathcal{R}$ to some unique node $N$ of $\mathcal{R}$. If $N$ represents more than one vertex of $G$, then we just have to add a new node $\bar{v}$ to $\mathcal{R}$, set the cactus mapping of $v$ to $\bar{v}$, and connect $\bar{v}$ to $N$ with two edges that correspond to the set of edges incident to $v$ in $G$. This completes the method by which we can derive a cactus for $G$ from $\mathcal{R}$.
\end{proof}

To obtain just a single minimum cut of $G$, one can apply the deterministic minimum cut algorithms of \cite{HenzingerRW20,DBLP:journals/jacm/KawarabayashiT19} on $\hatG$, which yields a minimum cut $C$ of $\hatG$ in time $\tildeO(|\hatG|) = \tildeO(n)$. To transform $C$ into a minimum cut of $G$, we compare its size with the minimum degree $\delta$ of any node in $G$:
If $|C|\leq\delta$, then we get a minimum cut of $G$ by simply mapping the edges from $C$ back to the corresponding edges of $G$. Otherwise, a minimum cut of $G$ is given by any vertex of $G$ that has degree $\delta$ (which is easy to maintain throughout the updates).

\subsection{Computing the Maximal k-Edge-Connected Subgraphs}
\label{sec:computingmaximal}

The data structure of Theorem~\ref{theorem:main} can also be used to improve the time bounds for computing the maximal $k$-edge-connected subgraphs of a simple graph, in for cases where $k$ is a sufficiently large polynomial of $n$. Specifically, we get an improvement for $k=\Omega(n^{1/8})$, c.f. Section~\ref{sec:introduction}.

A general strategy for computing these subgraphs is the following.
Let $G$ be a simple graph with $n$ vertices and $m$ edges, and let $k$ be a positive integer. The basic idea is to repeatedly find and remove cuts with value less than $k$ from $G$. First, as long as there are vertices with degree less than $k$, we repeatedly remove them from the graph. Now we are left with a (possibly disconnected) graph where every non-trivial connected component has minimum degree at least $k$. If we perform a minimum cut computation on a non-trivial connected component $S$ of $G$, there are two possibilities: either the minimum cut is at least $k$, or we will have a minimum cut $C$ with value less than $k$. In the first case, $S$ is confirmed to be a maximal $k$-edge-connected subgraph of $G$. In the second case, we remove $C$ from $S$, and thereby split it into two connected components $S_1$ and $S_2$. Then we recurse on both $S_1$ and $S_2$.
Since the graph is simple and $S$ has minimum degree at least $k$, it is a crucial observation that both $S_1$ and $S_2$ contain at least $k$ vertices (see e.g. \cite{DBLP:journals/jacm/KawarabayashiT19}).
Therefore the number of nodes decreases by at least $k$ with every iteration and hence the total recursion depth is $O(n/k)$.

The minimum cut computation takes time $T_{\text{mc}} = \tilde{O}(m)$ \cite{DBLP:journals/jacm/Karger00}, hence the worst-case running time of this approach is $\Theta(n/k \cdot T_{\text{mc}}) = \tilde{O}(mn/k)$.
We can use Theorem~\ref{theorem:main} to bring the time down to $\tildeO(m+n^2/k)$ w.h.p.

\maxkedgesubgraphs*
\begin{proof}
    We apply the algorithm shown in Figure~\ref{algorithm:computemaximal}, but we only perform the minimum cut computations on the NMC sparsifiers of the non-trivial connected component of $G$, as output by the data structure of Corollary~\ref{corollary:ccNMC}. Thus, we view $G$ as a dynamic graph, and first initialize required data structures. Whenever a cut is made, we consider the corresponding edges deleted.
    Since amortized time guarantees are sufficient here, we can use Case~\ref{theorem:main-amortized-adaptive} to get an update time of $\tildeO(1)$.
    The recursion depth is $O(n/k)$, and since the minimum cut computations at every level of the recursion are performed on a collection of graphs with $O(n)$ edges w.h.p., we obtain the theorem.
\end{proof}

\begin{algorithm}[t]
\caption{\textsf{Compute the maximal $k$-edge-connected subgraphs of a simple graph $G$}}
\label{algorithm:computemaximal}
\LinesNumbered
\DontPrintSemicolon
initialize the data structure of Theorem~\ref{theorem:main} on $G$; use the data structures by Holm~et~al.~\cite{DBLP:journals/jacm/HolmLT01} in order to implement DCS and DSF\;
mark every connected component of $G$ as ``active''\;
\ForEach{active connected component $S$ of $G$}{
  \While{$S$ contains a vertex $v$ with degree less than $k$}{
    remove $v$ from $S$, and collect it as a trivial maximal $k$-edge-connected subgraph of $G$\;
  }
  get the contracted graph $\widehat{S}$ from $S$ using Theorem~\ref{theorem:main}\;
  compute a minimum cut $C$ of $\widehat{S}$ using Karger's minimum cut algorithm~\cite{DBLP:journals/jacm/Karger00}\;
  \If{$|C|<k$}{
    remove $C$ from $G$\;
  }
  \Else{
    collect $G[S]$ as a maximal $k$-edge-connected subgraph of $G$, and mark $S$ as an ``inactive'' component\; 
  }
}
\textbf{return} the subgraphs of $G$ that were collected during the course of the algorithm
\end{algorithm}

Through a reduction to simple graphs, Theorem~\ref{theorem:computingmaximal} implies Corollary~\ref{corollary:mult}, which is a similar result with slightly worse time bounds for undirected \emph{multigraphs}.
Finally, the method that establishes Theorem~\ref{theorem:computingmaximal} naturally extends to the case of fully dynamic simple graphs, which yields Theorem~\ref{theorem:dynamicsubgraphs}.

\begin{lemma}
\label{lemma:reduction}
 Let $\mathcal{A}$ be an algorithm that takes $T(n,m,k)$ time to compute the maximal $k$-edge-connected subgraphs of a simple graph with $n$ vertices and $m$ edges.
Then there is an algorithm that takes $T\big(O(kn),m+O(k^2n),k\big)$ time to compute the maximal $k$-edge-connected subgraphs of a \emph{multigraph} graph with $n$ vertices and $m$ edges.
\end{lemma}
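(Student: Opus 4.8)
The plan is to transform the multigraph $G=(V,E)$ (we may assume it has no self-loops, as they are irrelevant to edge-connectivity) into a \emph{simple} graph $G'$ with $O(kn)$ vertices and $m+O(k^2n)$ edges whose maximal $k$-edge-connected subgraphs are in bijection with those of $G$, then run $\mathcal{A}$ on $G'$ and translate the output back.

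\textbf{Capping multiplicities.} First I would replace $G$ by the multigraph $G_{\mathrm{cap}}$ obtained by keeping, for every pair $\{u,v\}$ joined by $t$ parallel edges, only $\min(t,k)$ of them. A cut-by-cut comparison shows that for \emph{every} $S\subseteq V$, $G[S]$ is $k$-edge-connected iff $G_{\mathrm{cap}}[S]$ is: a parallel class crossing a given bipartition whose multiplicity is $\ge k$ already contributes $\ge k$ to the cut in $G_{\mathrm{cap}}[S]$, while a class of multiplicity $<k$ is untouched; connectivity is preserved since $k\ge 1$ copies of each class survive. Hence $G$ and $G_{\mathrm{cap}}$ have the same maximal $k$-edge-connected subgraphs, every parallel class of $G_{\mathrm{cap}}$ now has at most $k$ edges, and $G_{\mathrm{cap}}$ has at most $m$ edges. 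Bucketing edges by endpoint-pair, this costs $O(m+n)$ time.

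\textbf{Gadgets.} Next, replace each $v\in V$ by a clique $B_v$ on $k+1$ fresh vertices; thus $B_v$ is $k$-edge-connected, has $\binom{k+1}{2}=O(k^2)$ internal edges, and every vertex of $B_v$ has internal degree $k$. For each surviving parallel class of $G_{\mathrm{cap}}$ joining $u$ and $v$, say with $t\le k$ edges, route these $t$ edges as a matching between $t$ distinct vertices of $B_v$ and $t$ distinct vertices of $B_u$ (possible since $t\le k<k+1$). The resulting graph $G'$ is simple: two distinct routed edges either join different gadget-pairs, hence different endpoint-pairs, or join the same pair $B_u,B_v$ but use distinct endpoints by the matching; and a routed edge joins two different gadgets, so it is never a clique edge. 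We get $|V(G')|=n(k+1)=O(kn)$ and $|E(G')|\le m+\sum_v\binom{k+1}{2}=m+O(k^2n)$. This construction and the back-translation below run in $O(m+k^2n)$ time, which is dominated by $T\big(O(kn),\,m+O(k^2n),\,k\big)$ (any such algorithm must at least read its input).

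\textbf{Correspondence, back-translation, and the main obstacle.} For $S\subseteq V$ write $B_S=\bigcup_{v\in S}B_v$. Since each $B_v$ is $k$-edge-connected, $G'[B_S]$ is $k$-edge-connected iff $G_{\mathrm{cap}}[S]$ is: contracting the gadgets cannot lower the minimum cut, and conversely any bipartition of $B_S$ either splits some $B_v$ and hence cuts $\ge k$ clique edges, or descends to a bipartition of $S$ whose cut in $G_{\mathrm{cap}}[S]$ is $\ge k$. The step I expect to be the main obstacle is showing that every \emph{maximal} $k$-edge-connected subgraph $G'[S']$ of $G'$ is of the form $G'[B_S]$, i.e.\ $S'$ is a union of whole gadgets: if $S'$ is $k$-edge-connected and $S'\cap B_v\neq\emptyset$, I claim $G'[S'\cup B_v]$ is again $k$-edge-connected, which by maximality forces $B_v\subseteq S'$. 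Indeed, fix a bipartition $(A,\bar A)$ of $S'\cup B_v$; if it splits $B_v$ it cuts $\ge k$ clique edges, and otherwise $B_v$ lies entirely on one side, the bipartition restricts to a genuine bipartition of $S'$ (both parts nonempty, since $S'\cap B_v\neq\emptyset$), and every $G'[S']$-edge crossing that restriction still crosses $(A,\bar A)$, so the cut is $\ge k$. (Taking $S'=\{x\}$ shows in particular that no singleton inside a gadget is maximal.) Combining, $G'[B_S]$ is maximal in $G'$ exactly when $G_{\mathrm{cap}}[S]$ -- equivalently $G[S]$ -- is maximal in $G$, so the maximal $k$-edge-connected subgraphs of $G'$ are precisely $\{B_S : G[S]\text{ is maximal in }G\}$. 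Hence, after $\mathcal{A}$ returns them, we recover the maximal $k$-edge-connected subgraphs of $G$ by reading off $\{v : B_v\cap S'\neq\emptyset\}$ for each returned set $S'$, in $O(|V(G')|)=O(kn)$ time. The total running time is $T\big(O(kn),\,m+O(k^2n),\,k\big)$.
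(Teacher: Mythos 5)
Your construction is the same one the paper uses: replace each vertex with a $(k{+}1)$-clique and route parallel edges as matchings between gadgets, then run $\mathcal{A}$ on the resulting simple graph. The only deviation is in preprocessing (you cap multiplicities at $k$, the paper contracts pairs joined by $\ge k$ parallel edges), and otherwise you supply a detailed correctness argument for the gadget correspondence where the paper only states it ``essentially has the same maximal $k$-edge-connected subgraphs,'' so your proof is a correct and more rigorous version of the paper's.
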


\begin{proof}
Let $G$ be a multigraph with $n$ vertices and $m$ edges. We may assume w.l.o.g. that for every two vertices of $G$ there are less than $k$ parallel edges that connect them (because otherwise we can contract every such pair of vertices, since they belong to the same maximal $k$-edge-connected subgraph). Now we replace every vertex $v$ of $G$ with a clique $K(v)$ with $k+1$ vertices, and we use those cliques in order to substitute the parallel edges. To be precise, if $v$ and $u$ are two vertices of $G$ that are connected with $l$ parallel edges, then we select $l$ distinct vertices $x_1,\dots,x_l$ from $K(v)$ and $l$ distinct vertices $y_1,\dots,y_l$ from $K(u)$, and we add $l$ edges of the form $(x_i,y_i)$ for $i\in\{1,\dots,l\}$. Thus, we get a simple graph $G'$ with $O(kn)$ vertices and $m+O(k^2n)$ edges, that essentially has the same maximal $k$-edge-connected subgraphs as $G$.
The result follows by applying $\mathcal{A}$ on $G'$.
\end{proof}

\section*{Acknowledgements}
\textit{Monika Henzinger and Evangelos Kosinas}: This project has received funding from the European Research Council (ERC) under the European Union's Horizon 2020 research and innovation programme (MoDynStruct, No. 101019564)  \includegraphics[width=0.9cm]{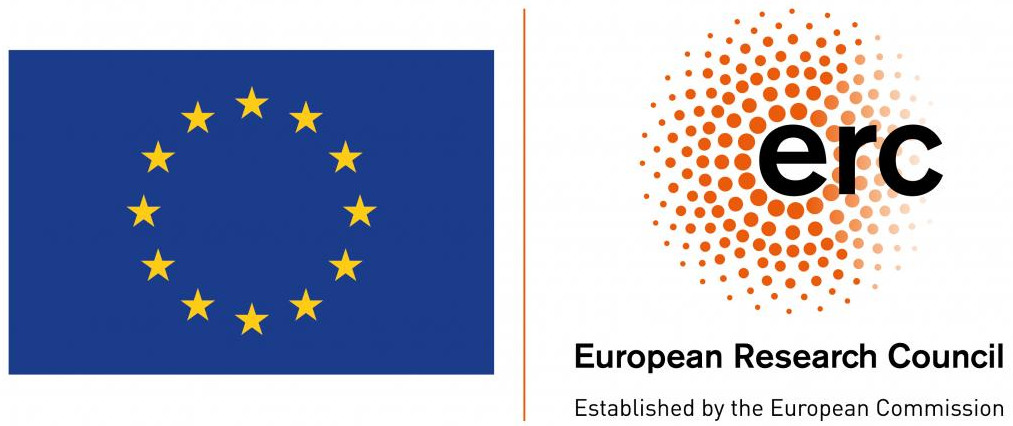} and the Austrian Science Fund (FWF) grant  \href{https://www.doi.org/10.55776/Z422}{DOI~10.55776/Z422} and grant \href{https://www.doi.org/10.55776/I5982}{DOI~10.55776/I5982}.\\
\textit{Harald Räcke and Robin Münk}: This project has received funding from the Deutsche Forschungsgemeinschaft (DFG, German Research Foundation) – 498605858.

\bibliography{main}

\appendix

\section{Sampling Elements from Dynamic Lists}
\label{sec:sampling}
Let $G$ be a dynamic graph, and let $n$ denote the number of vertices of $G$ at the current moment.
In order to construct a random $2$-out contraction of $G$, we need to sample two times an incident edge to $v$ (uniformly at random, with repetition allowed), for every vertex $v$ of $G$. If the adjacency list of $v$ was fixed, then we could perform this sampling of incident edges to $v$ in constant time per sampling (assuming that there is a random number generator that can provide a random number between $1$ and $n$ in constant time per query). This works by representing the adjacency list of $v$ as an array, and then returning the edge in the $t$-th position of this array, where $t$ is a randomly generated number between $1$ and $\mathit{deg}(v)$. However, since $G$ is a dynamic graph, the adjacency list of $v$ may change, and this way to sample will not work, because we cannot represent the adjacency list of $v$ as a static array. 

Here we provide a data structure that enables us to perform the sampling of an incident edge to any vertex $v$ in worst-case $O(\log{n})$ time per query, while supporting the update of the adjacency list of $v$ in worst-case constant time per update. The idea is reminiscent of the min-heap data structure (see e.g. \cite{DBLP:books/daglib/0023376}), although for us there is no min-heap property to maintain, and also we want to avoid using arrays, in order to provide clear worst-case time guarantees. Thus, we represent the adjacency list of $v$ both as a doubly linked list and as a binary tree. Specifically, let $e_1,\dots,e_l$ be the edges currently in the adjacency list $L$ of $v$, in this order. Then, we have a binary tree $T$ with root $e_1$, and the children of any edge $e_i\in\{e_1,\dots,e_l\}$ are $e_{2i}$ and $e_{2i+1}$ (whenever $2i\leq l$ or $2i+1\leq l$). We call $e_{2i}$ the \emph{left} child of $e_i$, and $e_{2i+1}$ the \emph{right} child of $e_i$. Notice that $T$ has $\lceil \log_2(l+1) \rceil$ levels, and at every level the edges appear in increasing order from left to right w.r.t. their order in $L$. Thus, the sampling of an edge in $L$ can be performed as follows. We generate a random number $t$ between $1$ and $l$ (we assume that we maintain in a variable $l$ the size of the adjacency list of $v$). If $t=1$, then we return $e_1$ (the root of $T$). Otherwise, let $1a_1\dots a_b$ be the binary number representation of $t$. Then we traverse $T$ starting from the root, according to the digits of $t$ starting from $a_1$: every time we descend to the left or to the right child of the current node depending on whether the current digit is $0$ or $1$ respectively. After exhausting the digits of $t$, we return the edge that corresponds to the final node that we reached with this traversal. Thus, we have returned a random element from $L$ in $O(\lceil \log_2(l+1) \rceil)=O(\log{n})$ time. 

In order to accommodate for fast (worst-case constant time) updates of the adjacency list of $v$, we have to enrich $T$ with more pointers. First, we organize $T$ in $\lceil \log_2(l+1) \rceil$ levels. We consider $\{e_1\}$ to constitute level $1$, $\{e_2,e_3\}$ to constitute level $2$, and so on. In general, level $d$ contains the edges $e_{2^{d-1}},e_{2^{d-1}+1},\dots,e_N$, where $N=2^d-1$ or less, depending on whether $d$ is the largest level of $T$ or not. For every level $d$ of $T$, there is a doubly linked list $T_d$ that consists of the edges at that level, in increasing order w.r.t. their order in $L$. We also maintain in a variable $\lambda$ the number of levels of $T$. 

Now, in order to perform an insertion of a new edge $e$ to the adjacency list of $v$, we simply append $e$ to the deepest level of $T$, in the last position. To be precise, we first take the last edge $e'\in L$. If $e'$ is the left child of its parent $p$ in $T$, then we let $e$ be the right child of $p$ (and we update the list $T_\lambda$ appropriately). Otherwise, we go to the next element $p'$ after $p$ in $T_{\lambda-1}$. If $p'$ exists, then we let $e$ be the left child of $p'$. Otherwise, $e'$ is the last entry of $T_\lambda$, and so we must create a new level for $T$. Thus, we take the first entry $f$ of $T_\lambda$, we let $e$ be the left child of $f$ on $T$, we set $\lambda\leftarrow\lambda+1$, and we initialize the new list $T_\lambda$ with a single entry for $e$. Thus, the insertion of $e$ demands a constant number of pointer manipulations. The deletion of an edge $e$ from $L$ is simply performed by replacing $e$ on $T$ with the last element of $L$. It is easy to see that this can be done after a constant number of pointer manipulations.  

\section{Reducing DSF and DCS to Dynamic Minimum Spanning Forest}
\label{sec:appendixproof}
In order to prove Lemma~\ref{lemma:reducetoMSF}, we define the version of the dynamic MSF problem that we need. This works on a dynamic weighted graph $G$, and maintains a minimum spanning forest $F$ of $G$. Specifically, the data structure for the dynamic MSF supports the following operations.

\begin{itemize}
\item{$\mathtt{insert}(e,w)$. Inserts a new edge $e$ to $G$ with weight $w$. The forest $F$ is updated accordingly: If $e$ has endpoints in two different trees of $F$, then it is automatically included in $F$. Otherwise, if the tree path of $F$ that connects the endpoints of $e$ contains an edge with weight larger than $w$, then one such edge $e'$ with maximum weight is deleted from $F$, and $e$ is automatically used as a replacement to $e'$. These events (and the edge $e'$) are reported to the user.}
\item{$\mathtt{delete}(e)$. Deletes $e$ from $G$. If $e$ is a tree edge of $F$, it is also deleted from $F$. This separates a tree $T$ of $F$ into two trees, $T_1$ and $T_2$. If $T$ is still connected in $G$, then an edge of $G$ with minimum weight is used as a replacement to $e$ in order to reconnect $T_1$ and $T_2$, and it is reported to the user.}
\end{itemize}

\begin{proof}[Proof of Lemma~\ref{lemma:reducetoMSF}]
Let $G$ be the dynamic input graph. First we will show the easier reduction from DSF to DCS. We need to maintain a spanning forest of $G$, and support the operations $\mathtt{insert}$ and $\mathtt{delete}$. Let us assume that so far we have maintained a spanning forest $F$ of $G$ using the DCS data structure. Suppose that we receive a DSF call $\mathtt{insert}(e)$ for an edge $e$. Then we just call $\mathtt{insert}_G(e)$ and then $\mathtt{insert}_F(e)$ (of the DCS data structure). The second call will insert $e$ on $F$ if the endpoints of $e$ lie in different trees of $F$ (otherwise $F$ remains the same). Thus, $F$ is still a spanning forest of $G\cup\{e\}$. Now suppose that we receive a DSF call $\mathtt{delete}(e)$ for an edge $e$. Let $x$ and $y$ be the endpoints of $e$. Then we first call $\mathtt{delete}_F(e)$. If this operation reports that $e$ is not part of $F$, then we just delete $e$ from $G$ with $\mathtt{delete}_G(e)$. Otherwise, we seek for a replacement of $e$ by calling $\mathtt{find\_cutedge}(x)$. If this operation returns an edge $e'$, then we call $\mathtt{insert}_F(e')$. In any case, we then delete $e$ from $G$ with $\mathtt{delete}_G(e)$. It is easy to see that a spanning forest of $G$ is thus maintained.

Now we will show that the DCS problem can be reduced to dynamic MSF (with an additional $O(\log n)$ worst-case time per operation). For that, we will also need some additional data structures, which we will describe shortly. Let us assume that so far we have maintained the forest $F$ that the user controls with the DCS data structure. We maintain the following invariants for the minimum spanning forest $F'$ maintained by the MSF data structure: $(1)$ the edges of $G$ are partitioned into \emph{light} edges with weight $0$ and \emph{heavy} edges with weight $1$, $(2)$ the light edges are precisely those that constitute $F$, and $(3)$ every tree of $F$ is a subtree of a tree of $F'$. (Notice that $(3)$ actually follows from $(1)$ and $(2)$, and from the fact that $F'$ is a minimum spanning forest of $G$. However, we state it explicitly for convenience.)

Now we will show how to reduce every DCS operation to MSF operations (plus some additional ones). For that, we assume that we maintain a copy of $F'$ with the dynamic tree data structure of Sleator and Tarjan~\cite{DBLP:journals/jcss/SleatorT83} (ST). This data structure maintains a collection of rooted trees, and it allows to reroot them, link them, cut them, store values on tree edges, and update those values, all in worst-case $O(\log n)$ time per operation (where $n$ is the total number of vertices). We note that the purpose of rerooting is to make a specified vertex $v$ the root of the tree that contains it. (Notice that this may change some parent pointers accordingly, but the cleverness of the data structure is that it does all that implicitly, and can still perform every update, and report the correct answer to every query, in $O(\log{n})$ time.) Furthermore, given a vertex $v$ that is not the root $r$ of the tree that contains it, it can report in worst-case $O(\log n)$ time the edge of the tree path from $r$ to $v$ that has maximum weight and is closest to the root. Let us call this operation $\mathtt{max\_edge}(v)$. Thus, if we have two vertices $x$ and $y$ on the same tree, and we want to find the edge of maximum weight on the tree path from $x$ to $y$ that is closest to $x$, then we first reroot at $x$, and then call $\mathtt{max\_edge}(y)$.

We let the forest maintained by the ST data structure be an exact copy of $F'$. (Thus, whenever the MSF data structure updates $F'$, the same changes are mimicked by the ST data structure.) We also maintain a copy of the collection of trees in $F'$ using the Euler-tour (ET) data structure \cite{DBLP:conf/stoc/HenzingerK95}. The ET data structure supports updates to the trees such as linking and cutting in worst-case $O(\log n)$ time per operation, and it also provides aggregate information for every tree $T$, such as the maximum weight of an edge in $T$ (and also a pointer to such an edge), in worst-case $O(\log n)$ time per query. 

Now, the DCS operation $\mathtt{insert}_G(e)$ is simulated by the MSF operation $\mathtt{insert}(e,1)$. Notice that this does not violate our invariants. Now suppose that we receive a call $\mathtt{insert}_F(e)$. First, we have to check whether the endpoints $x$ and $y$ of $e$ are in the same tree of $F$. It is easy to do that, assuming that we maintain e.g. a copy of $F$ using another ST data structure. If $x$ and $y$ belong to the same tree of $F$, then we do nothing. Otherwise, we insert $e$ to $F$, but now we also have to maintain our invariants on $F'$. To do this, we just have to re-insert $e$ to $G$ with the MSF data structure, but this time we will insert it with weight $0$, so that it will be forced to become part of $F$. Thus, we call $\mathtt{delete}(e)$ and then $\mathtt{insert}(e,0)$. It is easy to see that all invariants are maintained thus (in particular, $(3)$ remains true because it held so for the trees of $F$ that contained the endpoints of $e$ before its insertion). 

For $\mathtt{delete}_G(e)$, we just delete $e$ with $\mathtt{delete}(e)$. If $e$ was an edge of $F$, then it is automatically deleted from $F'$ (in this case, $e$ was indeed also an edge of $F'$, due to invariant $(2)$), and the invariants are still maintained. For $\mathtt{delete}_F(e)$, we simply have to convert $e$ into a heavy edge of $G$ (so that it is no longer interpreted as part of $F$). To do this, we just delete it from $G$ with  $\mathtt{delete}(e)$, and re-insert it with $\mathtt{insert}(e,1)$.

Finally, we will show how to answer every query of the form $\mathtt{find\_cutedge}(v)$, for a vertex $v$ of $G$. Recall that this operation must return an edge of $G$ that has one endpoint on the tree $T$ of $F$ that contains $v$, and the other endpoint outside of $T$ (if such an edge exists). Due to invariant $(3)$, we have that $T$ is a subtree of a tree $T'$ of $F'$. Therefore, since $F'$ is a minimum spanning forest of $G$, there is an edge that connects $T$ and $V(G)\setminus T$ if and only if $T'\neq T$. It is easy to test whether $T'=T$ by checking whether the number of vertices of the tree of $F$ that contains $v$ is the same as that of the tree of $F'$ that contains $v$, using e.g. the ST data structures. So let us assume that $T'\neq T$. Then, due to our invariants, there is a tree edge on $T'$ with weight $1$ that connects $T$ and $V(G)\setminus T$.
In order to find such an edge, we first ask the ET data structure on $T'$ to give us an edge $e'$ that has maximum weight. (Thus, due to $(1)$ and $(2)$, we have that $e'$ has weight $1$ and is not part of $T$.) If one of the endpoints of $e'$ is on $T$, then we can simply return $e'$. Otherwise, let $x$ and $y$ be the endpoints of $e'$. Now we use the ST data structure in order to reroot $T'$ on $v$, and then we use ST again in order to determine which of $x$ and $y$ is the parent of the other on the (rerooted) $T'$. So let us assume w.l.o.g. that $x$ is the parent of $y$ (and so $x$ is closer than $y$ to the root $v$). Then, using ST once more, we ask for the edge with maximum weight on the tree path of $T'$ from $v$ to $x$ that is closest to $v$ using $\mathtt{max\_edge}(x)$. Notice that, due to our invariants, this is precisely an edge that has an endpoint on $T$, and the other endpoint on $V(G)\setminus T$. This concludes the description of our reduction from DCS to MSF.
\end{proof} 

\end{document}